\newtheorem{definition}{\noindent{\it Definition}}[section]
\newtheorem{theorem}{\noindent{\it Theorem}}[section]
\newtheorem{lemma}[theorem]{\noindent{\it Lemma}}
\newtheorem{remark}[theorem]{\noindent{\it Remark}}
\newenvironment{proof}{\noindent{\it Proof:}}{$\hfill$ $\Box$\\ }
\begin{document}

\title{On optimal constacyclic codes}

\author{Giuliano G. La Guardia
\thanks{Giuliano Gadioli La Guardia is with Department of Mathematics and Statistics,
State University of Ponta Grossa (UEPG), 84030-900, Ponta Grossa,
PR, Brazil. }}

\maketitle

\begin{abstract}
In this paper we construct new families of
maximum-distance-separable (MDS) convolutional codes derived from
classical constacyclic codes. Additionally, we also construct new
families of almost MDS block and convolutional codes. Most of the
new convolutional codes constructed here are unit-memory and have
degree $\gamma=2$. Moreover, based on a different method, we
construct families of asymmetric quantum codes known in the
literature. All these results are mainly derived from the
investigation of suitable properties on cyclotomic cosets of such
codes.
\end{abstract}

\section{Introduction}\label{Intro}

The class of constacyclic codes
\cite{Berlekamp:1967,Krishna:1990,Aydin:2001,Blackford:2008,Liu:2012,Bakshi:2012}
has been investigated in literature. This class of codes contains
the well-known class of cyclic and negacyclic codes
\cite{Berlekamp:1967,Macwilliams:1977,Huffman:2003}. Although the
theory of cyclic codes is more investigated in literature, one can
also obtain codes with good or even optimal parameters when
considering constacyclic codes not equivalent to the cyclic ones.
The structure of constacyclic codes is in a certain sense similar to
cyclic codes, since they are principal ideals in the corresponding
quotient rings.

On the other hand, much investigation has been done concerning the
theory of convolutional codes with their corresponding properties,
as well as constructions of codes with good parameters or even
maximum-distance-separable (MDS) codes (i.e., codes attaining the
generalized Singleton bound \cite[Theorem 2.2]{Rosenthal:1999})
\cite{Forney:1970,Lee:1976,Piret:1988,York:1999,Rosenthal:1999,Rosenthal:2001,Gluesing:2006,LaGuardia:2013I,LaGuardia:2013II,LaGuardia:2013IV,LaGuardia:2016}.

In the last two decades, many researches have focussed the attention
in the investigation of properties of asymmetric quantum
error-correcting codes (AQECC)
\cite{Steane:1996,Calderbank:1998,Steane:1999,Ioffe:2007,Sarvepalli:2009,LaGuardia:2013III},
as well as constructions of AQECC with good or optimal parameters
(optimal in the sense that the code parameters attain the quantum
version of Singleton bound \cite[Lemma 3.2]{Sarvepalli:2009} with
equality; these codes are called maximum-distance-separable or MDS).

It is well-known that searching for codes with good (or optimal)
parameters always received much attention in literature. In order to
contribute to this topic of research, we propose constructions of
new families of optimal convolutional codes as well as new families
of optimal asymmetric quantum codes. To be more specific, in this
paper we utilize the families of classical constacyclic MDS codes
constructed here in order to obtain optimal convolutional codes.
Further, we construct two families of almost MDS codes of length
$n=2q+2$ over ${\mathbb F}_{q}$, after deriving the corresponding
almost MDS convolutional codes. Additionally, we also utilize these
classical MDS codes constructed in this paper to derive families of
asymmetric quantum MDS codes. To construct our MDS constacyclic
codes, we will compute and characterize all cyclotomic cosets of
these codes.

This paper is arranged as follows. In Section~\ref{secII} we give
the necessary background for the development of this work:
Section~\ref{secIIA} establishes known results on constacyclic codes
and Section~\ref{secIIB} presents a brief review of convolutional
codes. Section~\ref{secIII} presents constructions of classical MDS
constacyclic codes over ${\mathbb F}_{q}$ of length $n=q+1$ and
$n=(q+1)/2$, where $q$ is a prime power; see Lemmas~\ref{cyclo1} to
\ref{cyclo4}). These codes are obtained mainly by investigating the
structure of the corresponding cosets. In Section~\ref{secIV} we
construct new families of MDS and almost MDS convolutional codes. As
mentioned in the Abstract, most of the new convolutional codes
constructed here are unit-memory and have degree $\gamma=2$. In
Section~\ref{secV}, we propose constructions of new families of
optimal asymmetric quantum codes and finally, in
Section~\ref{secVI}, we give a brief summary of this paper.

\section{Preliminaries}\label{secII}

In this section we present preliminary results for the development
of this paper. We begin with a review of constacyclic codes, after
giving a brief review of convolutional codes.

\subsection{Constacyclic codes}\label{secIIA}

Throughout this paper, we always assume that $q$ is a prime power,
${\mathbb F}_{q}$ is a finite field with $q$ elements, ${\mathbb
F}_{q}^{*}= {\mathbb F}_{q}-\{ 0\}$ is the (cyclic) multiplicative
group of ${\mathbb F}_{q}$. The order of an element $\alpha \in
{\mathbb F}_{q}^{*}$ denoted by ${\operatorname{ord}}{(\alpha)}$, is
the smallest positive integer $t$ such that ${\alpha}^{t}=1$. We
always consider that $n$ (the code length) is a positive integer
with $\gcd (n, q)=1$. As usual, we utilize the notation $[n, k,
d]_{q}$ to denote the parameters of a classical linear code of
length $n$, dimension $k$ and minimum distance $d$.

Let $r$ be a positive integer with $r|(q-1)$, $rk=q-1$, and consider
that $\beta$ is a primitive $rn$-th root of unity in ${\mathbb
F}_{q^{m}}$, where $m= \ {\operatorname{ord}_{rn}}(q)$ denotes the
multiplicative order of $q$ (mod $rn$), (i.e, $m$ is the smallest
positive integer such that $rn | (q^{m} -1$)). Thus $\xi =
{\beta}^{r} \in {\mathbb F}_{q^{m}}$ is a primitive $n$-th root of
unity. Moreover $[{\beta}^{n}]^{r}=1\Longrightarrow
[{\beta}^{nr}]^{k}=1$ $\Longrightarrow
[{\beta}^{n}]^{(q-1)}=1\Longrightarrow
[{\beta}^{n}]^{q}={\beta}^{n}$. Hence, it follows that
${\beta}^{n}\in {\mathbb F}_{q}$. Since $\alpha={\beta}^{n}\neq 0$
then $\alpha\in {\mathbb F}_{q}^{*}$ is an element of order $r$.
Thus, if we consider the quotient ring $R_{n}={\mathbb F}_{q}[x]/(
x^{n}- \alpha )$, an $\alpha$-constacyclic code $C$ of length $n$ is
a principal ideal of $R_{n}$ under the usual correspondence ${\bf c}
= (c_0 , c_1 , \ldots , c_{n-1})\longrightarrow c_0 + c_1 x + \ldots
+ c_{n-1} x^{n-1}$ (mod$(x^{n}-\alpha)$). The generator polynomial
$g(x)$ of $C$ satisfies $g(x)|(x^{n}-\alpha)$. The roots of
$x^{n}-\alpha$ are given by ${\beta}^{1+ri}$ for all $0 \leq i \leq
n-1$. Let us consider the set ${\mathbb O}_{rn}= \{1+ri, 0\leq i\leq
n-1\}$; the defining set of $C$ is given by ${ \mathcal Z} = \{j \in
{\mathbb O}_{rn}| {\beta}^{j} \ \ is \ \ root \ \ of \ \ g(x)\}$.
The defining set is a union of some $q$-ary cyclotomic cosets modulo
$rn$ given by ${\mathbb C}_{s} = \{s, sq, \ldots, sq^{(m_{s}-1)}
\}$, where $m_{s}$ is the smallest positive integer such that
$sq^{(m_{s})}\equiv s$ (mod $rn$). The minimal polynomial (over
${\mathbb F}_{q}$) of ${\beta}^{j} \in { \mathbb F}_{q^{m}}$ is
denoted by ${M}^{(j)}(x)$ and it is given by
${M}^{(j)}(x)=\displaystyle \prod_{j \in {\mathbb{C}}_{i}}
(x-{\beta}^{j})$. The dimension of $C$ is given by $n - |{ \mathcal
Z}|$. Let us recall the concept of $\alpha$-constacyclic BCH codes:

\begin{definition}(Constacyclic BCH codes)\label{constadefi}
Let $q$ be a prime power with $\gcd (n, q)=1$. Let $\beta$ be a
primitive $rn$-th root of unity in ${\mathbb F}_{q^{m}}$. A
$\alpha$-constacyclic code $C$ of length $n$ over ${\mathbb F}_{q}$
is a BCH code with designed distance $\delta$ if, for some $b=1+ri$
we have $g(x)= \operatorname{lcm} \{{M}^{(b)}(x), {M}^{(b+r)}(x),
\ldots, {M}^{[b+r(\delta-2)]}(x)\}$, i.e., $g(x)$ is the monic
polynomial of smallest degree over ${\mathbb F}_{q}$ having
${{\alpha}^{b}}, {{\alpha}^{b+r}}\ldots,$
${{\alpha}^{[b+r(\delta-2)]}}$ as zeros.
\end{definition}

From Definition~\ref{constadefi}, it follows that $c\in C$ if and
only if $c({\alpha}^{b})=c({{\alpha}^{(b+r)}})=\ldots =
c({{\alpha}^{[b+r(\delta-2)]}})=0$. Thus the code has a string of
$\delta - 1$ consecutive $r$ powers of $\beta$ as zeros.

The BCH bound for Constacyclic codes (see
\cite{Krishna:1990,Aydin:2001} for instance) establishes that if $C$
is an $\alpha$-constacyclic code of length $n$ over ${\mathbb
F}_{q}$ with generator polynomial $g(x)$ and if $g(x)$ has the
elements $\{ {\beta}^{1+ri}| 0 \leq i\leq d-2 \}$ as roots, ($\beta$
is a primitive $rn$-th root of unity), then the minimum distance
$d_{C}$ of $C$ satisfies $d_{C}\geq d$. Recall that given an
arbitrary $[n, k, d]_{q}$ linear code $C$, the Singleton bound
asserts that $d\leq n-k+1$. If the parameters of $C$ satisfy $d =
n-k+1$, the code is called maximum-distance-separable (MDS) or
optimal.

\subsection{Convolutional Codes}\label{secIIB}

The class of convolutional codes is a well-studied class of codes
\cite{Forney:1970,Piret:1988,Johannesson:1999,Huffman:2003,Aly:2007}.
We assume the reader is familiar with the theory of convolutional
codes. For more details, see \cite{Johannesson:1999}. Recall that a
polynomial encoder matrix $G(D)=(g_{ij}) \in {\mathbb F}_{q}{[D]}^{k
\times n}$ is called \emph{basic} if $G(D)$ has a polynomial right
inverse. A basic generator matrix is called \emph{reduced} (or
minimal \cite{Johannesson:1999,Huffman:2003}) if the overall
constraint length $\gamma =\displaystyle\sum_{i=1}^{k} {\gamma}_i$,
where ${\gamma}_i = {\max}_{1\leq j \leq n} \{ \deg g_{ij} \}$, has
the smallest value among all basic generator matrices (in this case
the overall constraint length $\gamma$ will be called the
\emph{degree} of the resulting code).

\begin{definition}\cite{Johannesson:1999}
A rate $k/n$ \emph{convolutional code} $C$ with parameters $(n, k,
\gamma ; m,$ $d_{f} {)}_{q}$ is a submodule of ${ \mathbb F}_{q}
{[D]}^{n}$ generated by a reduced basic matrix $G(D)=(g_{ij}) \in {
\mathbb F}_q {[D]}^{k \times n}$, that is, $C = \{ {\bf u}(D)G(D) |
{\bf u}(D)\in {\mathbb F}_{q} {[D]}^{k} \}$, where $n$ is the
length, $k$ is the dimension, $\gamma =\displaystyle\sum_{i=1}^{k}
{\gamma}_i$ is the \emph{degree}, where ${\gamma}_i = {\max}_{1\leq
j \leq n} \{ \deg g_{ij} \}$, $m = {\max}_{1\leq i\leq
k}\{{\gamma}_i\}$ is the \emph{memory} and $d_{f}=$wt$(C)=\min
\{wt({\bf v}(D)) \mid {\bf v}(D) \in C, {\bf v}(D)\neq 0 \}$ is the
\emph{free distance} of the code.
\end{definition}

%

Recall that if ${[n, k, d]}_{q}$ denotes the parameters of a block
code with parity check matrix $H$, then we split $H$ into $m+1$
disjoint submatrices $H_i$ such that
\begin{eqnarray}\label{1}
H = \left[
\begin{array}{c}
H_0\\
H_1\\
\vdots\\
H_{m}\\
\end{array}
\right], \end{eqnarray}
where each $H_i$ has $n$ columns, obtaining
the polynomial matrix
\begin{eqnarray}\label{2}
G(D) =  {\tilde H}_0 + {\tilde H}_1 D + {\tilde H}_2 D^2 + \ldots +
{\tilde H}_m D^m,
\end{eqnarray}
having $\kappa$ rows, where $\kappa$ is the maximal number of rows
among the matrices $H_i$. The matrices ${\tilde H}_i$ for all $1\leq
i\leq m$, are derived from the respective matrices $H_i$ by adding
zero-rows at the bottom in such a way that the matrix ${\tilde H}_i$
has $\kappa$ rows in total. The matrix $G(D)$ generates a
convolutional code.

\begin{theorem}\cite[Theorem 3]{Aly:2007}\label{A}
Let $\operatorname{rk}$ denote the rank of a matrix. Let $C
\subseteq { \mathbb F}_{q}^{n}$ be a linear code with parameters
${[n, k, d]}_{q}$ and assume also that $H \in { \mathbb
F}_q^{(n-k)\times n}$ is a parity check matrix of $C$ partitioned
into submatrices $H_0, H_1, \ldots, H_m$ as in Eq.~(\ref{1}) such
that $\kappa = \operatorname{rk} H_0$ and $\operatorname{rk} H_i
\leq \kappa$ for $1 \leq
i\leq m$. Consider the polynomial matrix $G(D)$ as in Eq.~(\ref{2}). Then we have:\\
(a) The matrix $G(D)$ is a reduced basic generator matrix;\\
(b) Let $V$ be the convolutional code generated by $G(D)$ and
$V^{\perp}$ its Euclidean dual code. If $d_f$ and $d_f^{\perp}$
denote the free distances of $V$ and $V^{\perp}$, respectively,
$d_i$ denote the minimum distance of the code $C_i = \{ {\bf v}\in {
\mathbb F}_{q}^{n} \mid {\bf v} {\tilde H}_i^t =0 \}$ and
$d^{\perp}$ is the minimum distance of $C^{\perp}$, then one has
$\min \{ d_0 + d_m , d \} \leq d_f^{\perp} \leq  d$ and $d_f \geq
d^{\perp}$.
\end{theorem}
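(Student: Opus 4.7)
The statement divides cleanly into part (a), which asserts $G(D)$ is simultaneously basic and reduced, and part (b), which consists of three free-distance inequalities. My plan is to handle (a) by leveraging the full row rank of the constant term $\tilde H_0 = G(0)$, and (b) by direct coefficient-chasing in the convolutions $G(D)v(D)^T$ and $u(D)G(D)$. I expect (a) to be the principal obstacle, while (b) reduces to careful bookkeeping once the right coefficients are singled out.

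For (a), since $\tilde H_0$ has full row rank $\kappa$ over $\mathbb{F}_q$, the rows of $G(D)$ are linearly independent over $\mathbb{F}_q(D)$, so $G(D)$ has rank $\kappa$ as a polynomial matrix. To establish the \emph{basic} property I would show that the greatest common divisor of the $\kappa\times\kappa$ minors of $G(D)$ is a unit in $\mathbb{F}_q[D]$: selecting $\kappa$ columns on which $\tilde H_0$ is invertible produces a maximal minor of nonzero constant term, and one must argue, using the rank bound $\operatorname{rk} H_i \le \kappa$ together with the zero-padding structure of the $\tilde H_i$, that the ideal of such minors is the full ring. The \emph{reduced} property then follows by verifying that the row-wise leading-coefficient matrix of $G(D)$ has full row rank $\kappa$---the predictable-degree criterion---which again rests on the same structural hypotheses. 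The chief obstacle lies here: full rank of $G(0)$ alone does not imply a polynomial right inverse, since a polynomial with nonzero constant term may still vanish at some $\lambda \in \overline{\mathbb{F}}_q$, so the argument must genuinely exploit the block-splitting of $H$ and the rank hypotheses on all of the $H_i$ together.

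For (b), I would establish the three inequalities from specific coefficient equations. For $d_f^\perp \le d$: a minimum-weight $c \in C$, viewed as a constant vector in $\mathbb{F}_q[D]^n$, satisfies $G(D)c^T = \sum_{i=0}^{m} (\tilde H_i c^T) D^i = 0$, since each nonzero row of $\tilde H_i$ is a parity-check of $C$; hence $c \in V^\perp$ with weight $d$. For $d_f^\perp \ge \min\{d_0+d_m, d\}$: take nonzero $v(D) = \sum_{t=0}^\ell v_t D^t \in V^\perp$; if $\ell = 0$, every coefficient equation of $G(D) v(D)^T = 0$ forces $\tilde H_i v_0^T = 0$ for all $i$, so $v_0 \in \ker H = C$ and $\operatorname{wt}(v_0) \ge d$, while if $\ell \ge 1$ the constant term gives $v_0 \in \ker H_0 = C_0$ and the unique contribution to the coefficient of $D^{\ell+m}$ gives $v_\ell \in \ker H_m = C_m$, yielding total weight at least $d_0 + d_m$. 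Finally, for $d_f \ge d^\perp$: in a nonzero codeword $u(D) G(D) \in V$ let $j_0$ be the lowest index with $u_{j_0} \ne 0$; the coefficient of $D^{j_0}$ in $u(D) G(D)$ is $u_{j_0} \tilde H_0$, which is nonzero by full row rank of $\tilde H_0$ and lies in the row span of $H$, namely $C^\perp$, so its weight is at least $d^\perp$, and the free distance of $V$ inherits this lower bound.
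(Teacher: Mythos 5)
The paper itself offers no proof of this statement: it is quoted from \cite[Theorem 3]{Aly:2007}, so your attempt has to be judged against the standard argument for that result. Your part (b) is essentially that argument and is sound. For $d_f\ge d^{\perp}$: every coefficient of $u(D)G(D)$ lies in the row space of $H$, i.e.\ in $C^{\perp}$, and the lowest-order one is $u_{j_0}\tilde H_0\neq 0$ by full row rank of $\tilde H_0$. For the bounds on $d_f^{\perp}$: a constant codeword of $C$ lies in $V^{\perp}$, and the two extreme coefficient equations force $v_0\in C_0$ and $v_{\ell}\in C_m$ (or the time-reversed pair, depending on which pairing you take to define $V^{\perp}$; this is immaterial because time reversal preserves weights, but you should say which convention you use, since $G(D)v(D)^{T}=0$ is the module-theoretic orthogonality, not the sequence-space one). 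One small repair: in the case $\ell\ge 1$ you must first normalize $v_0\neq 0$ by factoring out powers of $D$ (which changes neither membership in $V^{\perp}$ nor the weight); otherwise $v_0\in C_0$ gives no weight contribution and the bound $d_0+d_m$ does not follow.

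The genuine gap is part (a), and you flagged it yourself without closing it: you state the two criteria to be verified (unit gcd of the $\kappa\times\kappa$ minors, and full row rank of the row-wise leading-coefficient matrix) but give no argument that either holds, beyond saying the proof ``must exploit the block splitting and the rank hypotheses.'' The missing ingredient is the full row rank of $H$ itself: as a parity check matrix of an $[n,k]_q$ code, $H$ has $n-k$ linearly independent rows, so \emph{all} rows of \emph{all} blocks $H_i$ are jointly independent over ${\mathbb F}_q$, hence over $\overline{\mathbb F}_q$ (and, in particular, $H_0$ has exactly $\kappa$ rows, so $\tilde H_0=H_0$). Then for any $\lambda\in\overline{\mathbb F}_q$ and any $u\neq 0$, the vector $uG(\lambda)=\sum_i \lambda^{i}\,(u\tilde H_i)$ is a linear combination of the rows of $H$ in which the $\kappa$ rows of $H_0$ carry the coefficients $u_1,\dots,u_{\kappa}$ (because $\lambda^{0}=1$); hence it is nonzero, so $\operatorname{rk}G(\lambda)=\kappa$ for every $\lambda$, the maximal minors have no common zero, and $G(D)$ is basic. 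The same independence settles reducedness at once: the leading coefficient of the $j$-th row of $G(D)$ is the $j$-th row of $H_{\gamma_j}$, and distinct $j$ yield distinct rows of $H$, so the leading row-coefficient matrix has $\kappa$ independent rows. Your proposed route of exhibiting one maximal minor with nonzero constant term cannot succeed on its own, for exactly the reason you noted; the hypothesis $\operatorname{rk}H_i\le\kappa=\operatorname{rk}H_0$ only guarantees the shape of the $\tilde H_i$, while the actual basic/reduced properties come from the joint independence of the rows of $H$.
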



\section{Classical MDS Codes}\label{secIII}

In this section we propose a construction of optimal
$\alpha$-constacyclic codes. It is well known that an $[n, k,
d{]}_{q}$ $\alpha$-constacyclic MDS code of length $n=q+1$ over ${
\mathbb F}_{q}$ exists for $k$ odd if $\alpha$ is a quadratic
residue in ${ \mathbb F}_{q}$ and for $k$ even if $\alpha$ is not a
quadratic residue in ${ \mathbb F}_{q}$ (see \cite{Krishna:1990}).
However, the constructions presented here differ from the ones known
in the literature since they consider properties of $q$-ary
cyclotomic cosets ($q$-cosets) modulo $rn$ in ${\mathbb O}_{rn}$. In
order to proceed further, we will show some useful lemmas that will
be utilized in our constructions. Lemmas~\ref{cyclo1} to
\ref{cyclo4} presented here are generalizations of Lemmas 4.1 and
4.4 in \cite{Kai:2013}.

\begin{lemma}\label{cyclo1}
Let $q$ be a power of an odd prime and $n=q+1$. Assume that $\alpha
\in { \mathbb F}_{q}^{*}$ is an element of order $r\geq 2$, where
$q-1=rk$, $k$ even. Put $s=n/2$. Then the following hold:
\begin{itemize}
\item [ a)] The unique $q$-cosets modulo $rn$, in ${\mathbb
O}_{rn}$, with only one element are the cosets ${\mathbb C}_{s}=\{
s\}$ and ${\mathbb C}_{[(r+1)s]}=\{ (r+1)s\}$.

\item [ b)] The remaining $q$-cosets modulo $rn$, in ${\mathbb O}_{rn}$,
are given by ${\mathbb
C}_{(s-ri)}=\{ s-ri, s+ri\}$, where $1\leq i\leq s-1$.
\end{itemize}
\end{lemma}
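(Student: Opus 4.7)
The plan is to analyze $q$-ary cosets within $\mathbb{O}_{rn} = \{1 + rj : 0 \leq j \leq n-1\}$, which one should view as the set of residues congruent to $1 \pmod r$ in $\{0, 1, \ldots, rn - 1\}$. The key preliminary fact is that $s = (q+1)/2$ itself lies in $\mathbb{O}_{rn}$: since $k$ is even, write $k = 2m$, and then $s = 1 + rm$, so $s \equiv 1 \pmod r$; consequently every shift $s \pm ri$ is also in $\mathbb{O}_{rn}$.

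For part (a), I would translate the singleton condition $qj \equiv j \pmod{rn}$ into $rn \mid j(q-1) = jrk$, equivalently $n \mid jk$, or $s \mid jm$. The heart of the argument is to verify $\gcd(s, m) = 1$; the only potential obstruction is the prime $2$, and that case is excluded by a parity argument ($2 \mid s$ forces $q \equiv 3 \pmod 4$, whereas $2 \mid m$ forces $q \equiv 1 \pmod 4$). Hence $s \mid j$, so $j \in \{0, s, 2s, \ldots, (2r-1)s\}$, and the further constraint $j \equiv 1 \pmod r$ (combined with $s \equiv 1 \pmod r$) singles out $j \in \{s, (r+1)s\}$. A direct check confirms both values indeed give singleton cosets.

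For part (b), I would compute, for each $1 \leq i \leq s-1$,
\[
q(s - ri) - (s + ri) \;=\; s(q - 1) - ri(q+1) \;=\; r(sk - in),
\]
and observe that $rn \mid r(sk - in)$ because $n = 2s$ and $k$ even give $n \mid sk$. This shows $\mathbb{C}_{s - ri} \supseteq \{s - ri, s + ri\}$. The coset has size exactly $2$ because $s - ri \equiv s + ri \pmod{rn}$ would force $n \mid 2i$, impossible in the given range. Distinctness across different $i$ follows the same way (a collision $s + r i_1 \equiv s \pm r i_2$ forces $n \mid (i_1 \mp i_2)$, which cannot occur for $1 \leq i_1, i_2 \leq s - 1$), and none of these pairs meets a singleton since $ri \not\equiv 0, rs \pmod{rn}$ in the allowed range.

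The proof closes with a counting check: the two singletons plus the $s - 1$ two-element cosets account for $2 + 2(s-1) = 2s = n$ elements, which is exactly $|\mathbb{O}_{rn}|$, so the enumeration is complete and no further cosets exist. The main technical obstacle I anticipate is the coprimality claim $\gcd(s, m) = 1$ in step (a); the remaining pieces are essentially linear congruence manipulations exploiting $q \equiv -1 \pmod n$ and $q \equiv 1 \pmod r$.
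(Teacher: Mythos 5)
Your proof is correct, and its computational core is the same as the paper's: the key identity $(s-ri)q\equiv s+ri \pmod{rn}$ (the paper verifies the equivalent $(s+ri)q\equiv s-ri$), the observation $s=1+r\frac{k}{2}\in{\mathbb O}_{rn}$, and the closing count $2+2(s-1)=n=|{\mathbb O}_{rn}|$ are exactly the steps in the paper. Where you genuinely diverge is part a): the paper merely checks that $s$ and $(r+1)s$ are fixed under multiplication by $q$ and obtains uniqueness of the singletons only implicitly, from the exhaustion at the end; you instead classify all solutions of $jq\equiv j\pmod{rn}$ directly, reducing to $n\mid jk$, i.e. $s\mid j(k/2)$, and proving $\gcd(s,k/2)=1$ via the parity dichotomy ($2\mid s$ forces $q\equiv 3\pmod 4$, while $2\mid (k/2)$ forces $q\equiv 1\pmod 4$; any odd common divisor would divide both $q-1$ and $q+1$). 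This buys a self-contained proof of a) that does not lean on b), at the price of the extra coprimality lemma; the paper's route is shorter but defers the word ``unique'' to the counting step. One point you should make explicit in b): showing ${\mathbb C}_{(s-ri)}\supseteq\{s-ri,s+ri\}$ together with distinctness only bounds the coset size from below; you also need closure, i.e. $(s+ri)q\equiv s-ri\pmod{rn}$, which follows from your displayed identity with $i$ replaced by $-i$, or from $q^{2}-1=rkn\equiv 0\pmod{rn}$ (the paper records this as $\operatorname{ord}_{rn}(q)=2$). Beware that the congruences $q\equiv -1\pmod n$ and $q\equiv 1\pmod r$ you cite at the end only give $q^{2}\equiv 1$ modulo $\operatorname{lcm}(r,n)$, which can equal $rn/2$ when $\gcd(r,n)=2$, so quote the factorization $q^{2}-1=rkn$ instead.
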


\begin{proof}
Notice that $\gcd (q, n) =1$ and ${\operatorname{ord}}_{rn}(q) =
2$ are true.\\
a) Let us consider $c=k/2$. Then we have
$sq=\frac{n}{2}(q-1)+\frac{n}{2}=\frac{nrk}{2}+\frac{n}{2}=
cnr+\frac{n}{2}\equiv s$ (mod $rn$). On the other hand, $(r+1)sq =
rsq + sq\equiv rs + s=(r+1)s$ (mod $rn$).\\
b) We have: $(s+ri)q\equiv s+rqi= s+r(q+1)i -ri\equiv s-ri$ (mod
$rn$). Since $s= 1+r\frac{k}{2}$, it follows that $s \in {\mathbb
O}_{rn}$. Therefore, the elements $s-ri, s+ri \in {\mathbb O}_{rn}$
for all $1\leq i\leq s-1$. Moreover, it is easy to see that these
cosets are mutually disjoint. Further, with exception of the cosets
${\mathbb C}_{s}=\{ s\}$ and ${\mathbb C}_{[(r+1)s]}=\{ (r+1)s\}$,
all the remaining cosets have two elements. The union of the cosets
have $2(s-1)+2 = n$ elements, hence the result follows.
\end{proof}

\begin{lemma}\label{cyclo2}
Let $q$ be a power of an odd prime and $n=q+1$. Assume that $\alpha
\in { \mathbb F}_{q}^{*}$ is an element of order $r\geq 2$, where
$q-1=rk$, $k$ odd. Then the following hold:
\begin{itemize}
\item [ a)] If $t=(n+r)/2$ then the $q$-coset in ${\mathbb O}_{rn}$ containing
$t$ is of the form ${\mathbb C}_{t}=\{t, t-r\}$.

\item [ b)] The remaining $q$-cosets in ${\mathbb O}_{rn}$ are of the form ${\mathbb
C}_{(t+ri)}=\{t+ri, t-ri-r \}$, where $1\leq i\leq n/2 -1$.
\end{itemize}
\end{lemma}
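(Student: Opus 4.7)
The plan is to mimic the argument of Lemma~\ref{cyclo1}, exploiting that $\operatorname{ord}_{rn}(q)=2$ so each coset has at most two elements, and then do a counting argument. The key observation upfront is that $q^{2}-1=(q-1)(q+1)=rkn\equiv 0 \pmod{rn}$, while $q\not\equiv 1\pmod{rn}$, hence $\operatorname{ord}_{rn}(q)=2$. Thus every $q$-ary coset modulo $rn$ has the form $\{x,xq \bmod rn\}$.

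For (a) I would first observe that $k$ odd and $q-1=rk$ even forces $r$ to be even, so that $t=(n+r)/2$ is an integer; moreover, writing $n=rk+2$ one gets $t=1+r(k+1)/2$, so $t\in\mathbb{O}_{rn}$. The central computation is to show $tq\equiv t-r \pmod{rn}$. I would expand
\[
2tq=(n+r)q=nq+rq,
\]
and use the elementary congruences $nq=n(rk+1)\equiv n \pmod{rn}$ (in fact $nq\equiv n(r+1)\pmod{2rn}$ since $k$ is odd) and $rq=r(q+1)-r=rn-r\equiv -r\pmod{rn}$ (or mod $2rn$), to deduce $2tq\equiv n-r\pmod{2rn}$, i.e.\ $tq\equiv (n-r)/2=t-r\pmod{rn}$. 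Since $r\geq 2$, $t\neq t-r$, so $\mathbb{C}_{t}=\{t,t-r\}$.

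For (b), the same calculation generalizes easily once one notes that $riq=ri(n-1)\equiv -ri\pmod{rn}$. Then
\[
(t+ri)q\equiv (t-r)+(-ri)=t-r-ri \pmod{rn},
\]
which gives $\mathbb{C}_{t+ri}=\{t+ri,\,t-ri-r\}$, a two-element coset because $r(2i+1)\not\equiv 0\pmod{rn}$ (as $n$ is even and $2i+1$ is odd). Both $t+ri$ and $t-ri-r$ remain of the form $1+rj$ when reduced mod $rn$ (since they differ from $t$ or $t-r$ by a multiple of $r$), so these really are cosets inside $\mathbb{O}_{rn}$.

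The remaining work is the bookkeeping: disjointness and exhaustion. For disjointness, $t+ri\equiv t+rj\pmod{rn}$ forces $i\equiv j\pmod n$, and $t+ri\equiv t-r-rj\pmod{rn}$ forces $i+j+1\equiv 0\pmod n$, both of which are impossible in the range $0\leq i,j\leq n/2-1$ (including $i=0$ for the coset in (a)). Finally, the $n/2$ cosets $\mathbb{C}_{t},\mathbb{C}_{t+r},\ldots ,\mathbb{C}_{t+r(n/2-1)}$ each have exactly two elements, giving $n=|\mathbb{O}_{rn}|$ in total, so they exhaust all $q$-ary cosets mod $rn$. The only mildly delicate point — the only place things could easily go wrong — is the modular computation of $tq$; keeping careful track of factors of $2$ (using that $r$ is even and $k$ is odd) is where I would slow down.
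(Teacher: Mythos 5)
Your proposal is correct and follows essentially the same route as the paper's proof: verify $t=1+r\tfrac{k+1}{2}\in{\mathbb O}_{rn}$, compute $tq\equiv t-r$ and $(t+ri)q\equiv t-r-ri \pmod{rn}$, and finish by disjointness and counting (which you spell out more explicitly than the paper), the only difference being that you organize the arithmetic modulo $2rn$ and divide by $2$ while the paper computes directly modulo $rn$. One small caution: the parenthetical ``$rq\equiv -r \pmod{2rn}$'' is false (one has $rq=rn-r\not\equiv -r \pmod{2rn}$), but since you also recorded the exact identity $rq=rn-r$, the stated conclusion $2tq\equiv n-r\pmod{2rn}$, and hence $tq\equiv t-r\pmod{rn}$, is still correct.
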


\begin{proof}
Note that $t=(n+r)/2= 1+r\left[\frac{(k+1)}{2}\right]$, so the
element $t=(n+r)/2$ is of the form $1+ri$. Thus the elements $t+ri,
t-ri-r$ also belong
to ${\mathbb O}_{rn}$ for all $1\leq i\leq n/2 - 1$.\\
a) Consider that $c=(k+1)/2$. Then one has: $tq=\frac{(n+r)}{2}q=
\frac{[r(k+1)+2]}{2}q=(rc+1)q= (rc+1)(q+1)-rc-1\equiv q-rc=
q-r\frac{(k+1)}{2}= \frac{[2q-(q-1)-r]}{2}=
\frac{(q+1-r)}{2}=\frac{n-r}{2}=
\frac{n+r}{2}-r = t-r$ (mod $rn$).\\
b) We have: $(t+ri)q\equiv t-r+rqi= t-r+ri(q+1)-ri\equiv t-r-ri$
(mod $rn$). Similarly to the proof of Lemma~\ref{cyclo1}, it is easy
to see that these cosets are mutually disjoint and the union of them
has $n$ elements.
\end{proof}

\begin{lemma}\label{cyclo3}
Let $q\equiv 1$ $\operatorname{(} \operatorname{mod} 4
\operatorname{)}$ be a power of an odd prime, $n=(q+1)/2$ and
$\alpha \in { \mathbb F}_{q}^{*}$ such that
${\operatorname{ord}}{(\alpha)}= r\geq 2$, where $q-1=rk$, $k$ even.
Then the following hold:
\begin{itemize}
\item [ a)] One has ${\mathbb C}_{n}=\{n\}$ modulo $rn$.

\item [ b)] The remaining $q$-cosets in ${\mathbb O}_{rn}$ are of the form ${\mathbb
C}_{(n-ri)}=\{n-ri, n+ri \}$ $\operatorname{(} \operatorname{mod} rn
\operatorname{)}$, where $1\leq i\leq (n-1)/2$.
\end{itemize}
\end{lemma}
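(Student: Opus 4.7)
The plan is to mirror the strategy used in Lemmas~\ref{cyclo1} and \ref{cyclo2}: first verify that the candidate representatives actually lie in $\mathbb{O}_{rn}$, then compute the action of multiplication by $q$ modulo $rn$ explicitly, and finally use a counting argument to confirm that the listed cosets exhaust $\mathbb{O}_{rn}$.

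First I would check membership in $\mathbb{O}_{rn}$. Since $q-1=rk$ with $k$ even, we have $n-1 = (q-1)/2 = r(k/2)$, so $n = 1 + r(k/2)$ sits in $\mathbb{O}_{rn}$, and consequently $n\pm ri = 1 + r(k/2 \pm i)$ also lies in $\mathbb{O}_{rn}$ for the claimed range of $i$ (reducing modulo $n$ when necessary). For part (a), the direct computation
\begin{equation*}
nq - n = n(q-1) = nrk \equiv 0 \pmod{rn}
\end{equation*}
immediately gives $nq\equiv n\pmod{rn}$, so $\mathbb{C}_n = \{n\}$.

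For part (b), the crucial identity is $riq = ri(q+1) - ri = 2rni - ri \equiv -ri \pmod{rn}$, valid because $q+1 = 2n$. Combined with part (a) this yields
\begin{equation*}
(n+ri)q \equiv n - ri \pmod{rn}, \qquad (n-ri)q \equiv n + ri \pmod{rn},
\end{equation*}
so each pair $\{n-ri, n+ri\}$ is closed under multiplication by $q$. To see that these are genuine two-element cosets, I would check $n-ri\not\equiv n+ri\pmod{rn}$, i.e.\ $2i\not\equiv 0\pmod n$. Here the hypothesis $q\equiv 1\pmod 4$ plays its role: it forces $n=(q+1)/2$ to be odd, and then for $1\le i\le (n-1)/2$ the value $2i$ lies strictly between $0$ and $n$, so the pair is nondegenerate.

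Finally, I would argue disjointness and exhaustiveness. Disjointness follows from the fact that the representatives $n-ri$, $1\le i\le (n-1)/2$, are distinct modulo $rn$ (same bound on $i$), and a simple count gives $1 + 2\cdot \tfrac{n-1}{2} = n = |\mathbb{O}_{rn}|$, so these cosets cover everything. I do not anticipate a serious obstacle; the only subtle point is the parity check that makes the two-element cosets genuinely have two elements, which is precisely where the assumption $q\equiv 1\pmod 4$ (and hence $n$ odd) enters, in contrast to Lemma~\ref{cyclo1} where $n$ is even and an extra singleton coset $\{(r+1)s\}$ appears.
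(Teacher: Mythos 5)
Your argument is correct and is exactly the computation the paper leaves implicit (``Items a) and b) follow by straightforward computation (note that $n$ is odd)''), mirroring Lemmas~\ref{cyclo1} and \ref{cyclo2}: membership of $n=1+r\frac{k}{2}$ in ${\mathbb O}_{rn}$, the identities $nq\equiv n$ and $(n\pm ri)q\equiv n\mp ri \pmod{rn}$, nondegeneracy of the pairs from $n$ being odd, and the count $1+2\cdot\frac{n-1}{2}=n=|{\mathbb O}_{rn}|$. For disjointness, note only that besides distinctness of the representatives one also needs $n-ri\not\equiv n+rj \pmod{rn}$, i.e.\ $i+j\not\equiv 0 \pmod{n}$, which follows at once from the same bound $1\leq i,j\leq (n-1)/2$ that you invoke.
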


\begin{proof}
Since $k$ is even then $n= 1 + r\frac{k}{2} \in {\mathbb O}_{rn}$.
Items a) and b) follow by straightforward computation (note that,
from hypotheses, $n$ is odd).
\end{proof}

\begin{lemma}\label{cyclo4}
Let $q\equiv 3$ $\operatorname{(} \operatorname{mod} 4
\operatorname{)}$ be a power of an odd prime, $n=(q+1)/2$ and
$\alpha \in { \mathbb F}_{q}^{*}$ such that
${\operatorname{ord}}{(\alpha)}= r\geq 2$, where $q-1=rk$, $k$ even.
Then the following hold:
\begin{itemize}
\item [ a)] The unitary $q$-cosets modulo $rn$, in ${\mathbb O}_{rn}$, are given by
${\mathbb C}_{n}=\{n\}$ and ${\mathbb
C}_{\frac{(r+2)n}{2}}=\left\{\frac{(r+2)n}{2} \right\}$.

\item [ b)] The remaining $q$-cosets modulo $rn$, in ${\mathbb O}_{rn}$, are of the form
${\mathbb C}_{(n-ri)}=\{n-ri, n+ri \}$, where $1\leq i\leq n/2-1$.
\end{itemize}
\end{lemma}

\begin{proof}
Note that in this case $n$ is even.
\end{proof}

\begin{remark}
It is interesting to observe that in Lemmas~\ref{cyclo1} to
\ref{cyclo4} we have assumed $r\geq 2$ to avoid the case when
$\alpha =1$, i.e., cyclic codes.
\end{remark}

\begin{lemma}\label{cyclo5}
Let $q= 2^{t}$, where $t\geq 2$, and $n=q+1$. Assume that $\alpha
\in { \mathbb F}_{q}^{*}$ is such that
${\operatorname{ord}}{(\alpha)}= r$, where $q-1=rk$. Suppose also
that $i_{0}=\frac{(k-1)}{2}$, $s = 1 + ri_{0}$ and $t= s +
r(1+\frac{q}{2})$ $\operatorname{(}t$ is considered
$\operatorname{(} \operatorname{mod} rn
\operatorname{)}\operatorname{)}$. Then the following hold:
\begin{itemize}
\item [ a)] ${ \mathbb C}_{s}=\{s, s+r\}$ and ${\mathbb C}_{t}=\{ t\}$
in modulo $rn$, in ${\mathbb O}_{rn}$.

\item [ b)] The remaining $q$-ary cosets modulo $rn$, in ${\mathbb O}_{rn}$, are
given by ${\mathbb C}_{(s-ri)}=\{s-ri, s+r+ri \}$, where $1\leq
i\leq q/2-1$.
\end{itemize}
\end{lemma}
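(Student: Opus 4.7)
The plan is to reduce everything to iterating a single identity: since $n = q + 1$, one has $rq = rn - r$, hence $rq \equiv -r \pmod{rn}$. Consequently, for any $a = 1 + r\ell \in {\mathbb O}_{rn}$, one obtains $aq = q + r\ell q \equiv (n-1) - r\ell = 1 + r(k - \ell) \pmod{rn}$, so the multiplication-by-$q$ map acts on the index $\ell$ by the rule $\ell \mapsto k - \ell$ modulo $n$.

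First I would verify that the indicated representatives lie in ${\mathbb O}_{rn}$. Because $q = 2^{t}$ is even, $q - 1 = rk$ is odd, so both $r$ and $k$ are odd; hence $i_{0} = (k-1)/2$ is an integer and $s = 1 + r i_{0} \in {\mathbb O}_{rn}$. Similarly, $t = s + r(1 + q/2) = 1 + r j_{0}$ with $j_{0} = (k + 1 + q)/2$, which is an integer because $k+1$ and $q$ are both even, and satisfies $j_{0} \leq q = n-1$ since $k \leq q - 1$; thus $t \in {\mathbb O}_{rn}$ as well.

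For part (a), applying the identity with $\ell = i_{0}$ gives $sq \equiv 1 + r\frac{k+1}{2} = s + r \pmod{rn}$, and then $(s+r)q \equiv 1 + r\frac{k-1}{2} = s$, so ${\mathbb C}_{s} = \{s, s+r\}$. Applying the identity with $\ell = j_{0}$ produces the index $k - j_{0} = -(q - k + 1)/2$, which is negative; adding $rn$ and simplifying yields $1 + r(k - j_{0} + n) = 1 + r j_{0} = t$, hence ${\mathbb C}_{t} = \{t\}$. This required reduction step (not present in the ``even $k$'' setting of Lemma~\ref{cyclo1}) is the one place where care is needed and is the main, if mild, obstacle in the proof.

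For part (b), for each $1 \leq i \leq q/2 - 1$ one computes $(s - ri)q = sq - (ri)q \equiv (s + r) - (-ri) = s + r + ri \pmod{rn}$, using $sq \equiv s + r$ and $(ri)q \equiv -ri$; symmetrically $(s + r + ri)q \equiv s - ri$, establishing ${\mathbb C}_{s - ri} = \{s - ri, s + r + ri\}$. To finish, pairwise disjointness of the listed cosets reduces to checking that the underlying indices $i_{0}, i_{0}+1, j_{0}, i_{0} - i, i_{0} + 1 + i$ (for $1 \leq i \leq q/2 - 1$) are pairwise distinct modulo $n$, which is immediate from their explicit form. The total count is $2 + 1 + 2(q/2 - 1) = q + 1 = n = |{\mathbb O}_{rn}|$, so the cosets exhibited partition ${\mathbb O}_{rn}$, proving that no other $q$-ary coset modulo $rn$ remains.
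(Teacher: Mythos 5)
Your proof is correct and rests on the same elementary computation as the paper's, namely reducing modulo $rn$ via $r(q+1)\equiv 0$, i.e.\ $rq\equiv -r$, which you merely repackage as the index map $\ell\mapsto k-\ell \pmod{n}$ on elements $1+r\ell$ of ${\mathbb O}_{rn}$. In fact you go further than the paper, which states ``We only show Item a)'': your treatment of item b), the verification that $s,t\in{\mathbb O}_{rn}$, and the disjointness-plus-counting argument showing the listed cosets exhaust ${\mathbb O}_{rn}$ supply exactly the details the paper omits, and they check out.
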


\begin{proof}
We only show Item a): $sq=\left[ 1+ r\frac{(k-1)}{2} \right] q = q +
r(q+1)\frac{(k-1)}{2} -r\frac{(k-1)}{2} \equiv rk + 1
-r\frac{(k-1)}{2} = 1 +r\frac{(k-1)}{2} + r$ ( mod $rn$). On the
other hand, $tq \equiv s + r + rq + rq\frac{q}{2} = s + r + rq +
r(q+1)\frac{q}{2} - r\frac{q}{2} \equiv s + r + r\frac{q}{2}=t$ (
mod $rn$).
\end{proof}

%
%
%

Until now we have dealt with properties and characterizations of
cyclotomic cosets. In the following theorems, we utilize the
previous results about cosets for the construction of optimal
constacyclic codes. As we will see in the following sections, these
codes will be utilized in order to construct families of optimal
convolutional codes and also to derive families of optimal
asymmetric quantum codes.

\begin{theorem}\label{mainclasI}
Assume that all hypotheses of Lemma~\ref{cyclo1} hold. Then there
exist MDS constacyclic codes with parameters $[n, n-2i-1,
2i+2{]}_{q}$, for every $0\leq i\leq \frac{n}{2} -1$.
\end{theorem}
\begin{proof}
We adopt the same notation of Lemma~\ref{cyclo1}. Let us consider
the $\alpha$-constacyclic code with defining set ${ \mathcal Z} =
\bigcup_{l=0}^{i}{\mathbb C}_{(s+rl)}$, where  ${\mathbb C}_{s}=\{
s\}$ and ${\mathbb C}_{(s-ri)}=\{ s-ri, s+ri\}$, for every $0\leq i
\leq n/2 -1$. The defining set of code $C$ consists of exactly $i$
cosets with two elements and one coset having only one element (see
Lemma~\ref{cyclo1}), i.e., ${ \mathcal Z}$ has $2i + 1$ elements; so
$C$ has dimension $k=n-2i-1$. From the BCH bound for constacyclic
codes and by construction, $C$ has minimum distance $d \geq 2i+2$.
By applying the Singleton bound, it follows that $d=2i+2$, hence
there exists an $[n, n-2i-1, 2i+2{]}_{q}$ MDS code, for every $0\leq
i\leq n/2 -1$.
\end{proof}

\begin{theorem}\label{mainclasII}
Assume that all hypotheses of Lemma~\ref{cyclo2} hold. Then there
exist MDS constacyclic codes with parameters $[n, n-2i-2,
2i+3{]}_{q}$, for every $0\leq i \leq \frac{n}{2} -2$.
\end{theorem}
\begin{proof}
Let $C$ be the $\alpha$-constacyclic code with defining set ${
\mathcal Z} = \bigcup_{l=0}^{i}{\mathbb C}_{(t+rl)}$, $0\leq i \leq
n/2 - 2$. Analogously to the previous proof, the defining set of
code $C$ consists of exactly $i+1$ $q$-cosets. From
Lemma~\ref{cyclo2}, each of them has two elements, i.e., ${ \mathcal
Z}$ has $2(i + 1)$ elements; so $C$ has dimension $k= n-2i-2$. From
the BCH bound, from construction and by applying the Singleton
bound, one concludes that $C$ is an $[n, n-2i-2, 2i+3{]}_{q}$ MDS
code, for every $0\leq i\leq n/2 -2$.
\end{proof}

\begin{theorem}\label{mainclasIII}
Assume that all hypotheses of Lemma~\ref{cyclo3} hold. Then there
exist MDS constacyclic codes with parameters $[n, n-2i-1,
2i+2{]}_{q}$, for every $0\leq i\leq \frac{(n-1)}{2}-1$.
\end{theorem}
\begin{proof}
Let $C$ be the $\alpha$-constacyclic code with defining set ${
\mathcal Z} = \bigcup_{l=0}^{i}{\mathbb C}_{(n-rl)}$. The defining
set of code $C$ consists of exactly $i$ cosets with two elements and
one coset having only one element (see Lemma~\ref{cyclo3}), i.e., ${
\mathcal Z}$ has $2i + 1$ elements; so $C$ has dimension $k=n-2i-1$.
From the BCH bound, from construction and by applying the Singleton
bound, one concludes that $C$ is an $[n, n-2i-1, 2i+2{]}_{q}$ MDS
code.
\end{proof}

\begin{remark}\label{obs1}
Applying Lemma~\ref{cyclo4} one can get analogous results to that of
Theorem~\ref{mainclasIII}.
\end{remark}

\begin{theorem}\label{mainclasIIIA}
Assume all hypotheses of Lemma~\ref{cyclo5} hold. Then there exist
MDS constacyclic codes with parameters $[n, n-2i-2, 2i+3{]}_{q}$,
for every $0\leq i\leq \frac{(n-1)}{2}-2$.
\end{theorem}
\begin{proof}
Analogous to that of Theorem~\ref{mainclasI}.
\end{proof}

\begin{theorem}\label{mainclasIIIB}
Assume all hypotheses of Lemma~\ref{cyclo5} hold. Then there exist
MDS constacyclic codes with parameters $[n, n-2i-1, 2i+2{]}_{q}$,
for every $0\leq i\leq \frac{(n-1)}{2}-1$.
\end{theorem}
\begin{proof}
Similar to that of Theorem~\ref{mainclasI}.
\end{proof}

Recall the an $[n, k, d]_{q}$ linear code $C$ is called
\emph{almost} MDS if it has Singleton defect $1$, i.e., $k=n-r-1$
and $d=r+1$. In this context, we construct a simple family of almost
MDS codes in the following theorem:

\begin{theorem}\label{mainclasIV}
Assume that $q\equiv 3 \operatorname{(mod} 4\operatorname{)}$ is a
power of an odd prime and let $n=2q+2$, Then there exist almost MDS
codes with parameters $[n, n-4, d\geq 4]_{q}$.
\end{theorem}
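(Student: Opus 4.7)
The plan is to exhibit a suitable $\alpha$-constacyclic code of length $n = 2(q+1)$ and apply the BCH bound for constacyclic codes to control its distance. Because $q\equiv 3 \pmod 4$, we may write $q-1 = 2m$ with $m$ odd, so $r := (q-1)/2 = m$ satisfies $r \mid q-1$ and (for $q \geq 7$) $r \geq 2$. I would choose $\alpha \in \mathbb{F}_q^*$ with $\operatorname{ord}(\alpha) = r$, so that $rn = (q-1)(q+1) = q^2-1$ and $\operatorname{ord}_{rn}(q) = 2$. In particular, every $q$-ary cyclotomic coset modulo $rn$ has cardinality one or two.

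The crucial observation is that $q = 1 + 2r$ lies in $\mathbb{O}_{rn}$, so the $q$-ary coset of $1$ is
\[
\mathbb{C}_1 = \{1,\, q\} = \{1,\, 1+2r\},
\]
which already contains the first and third terms of the arithmetic progression $1,\, 1+r,\, 1+2r$. Next I would analyze $\mathbb{C}_{1+r} = \{1+r,\, (1+r)q \bmod (q^2-1)\}$, the coset of the middle term, and verify by a direct computation that $\mathbb{C}_1 \cap \mathbb{C}_{1+r} = \emptyset$. Setting $\mathcal{Z} = \mathbb{C}_1 \cup \mathbb{C}_{1+r}$ then gives $|\mathcal{Z}| = 4$, with $\mathcal{Z}$ containing the three consecutive roots $\beta^1,\beta^{1+r},\beta^{1+2r}$.

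Taking $C$ to be the $\alpha$-constacyclic code with defining set $\mathcal{Z}$, we obtain $\dim C = n - |\mathcal{Z}| = n-4$ and, by the BCH bound for constacyclic codes, $d(C) \geq 4$. Combined with the Singleton bound $d(C) \leq n-k+1 = 5$, this yields the desired $[n,\, n-4,\, d\geq 4]_q$ code with Singleton defect at most one, i.e., almost MDS as claimed.

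The main obstacle is verifying the disjointness $\mathbb{C}_1 \cap \mathbb{C}_{1+r} = \emptyset$, which amounts to showing that $(1+r)q$ is congruent to neither $1$ nor $q$ modulo $q^2-1$. Writing $(1+r)q = q + \tfrac{(q-1)q}{2}$, both checks reduce to short divisibility statements using $\gcd(q+1,\,q-1)\leq 2$ and the fact that $1+r = (q+1)/2 < q+1$; the small case $q=3$, where $r=1$ would force a cyclic rather than a proper constacyclic construction, would be addressed separately by a direct choice.
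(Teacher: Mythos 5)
Your proposal is correct, but it reaches the stated code by a genuinely different construction than the paper. The paper stays cyclic: it takes $\alpha=1$ (so $r=1$), works with $q$-ary cosets modulo $n=2q+2$ itself, sets $s=(q-1)/2$, checks that ${\mathbb C}_{s}=\{s,s+2\}$ and that ${\mathbb C}_{s+1}$ is a disjoint two-element coset, and uses the defining set ${\mathbb C}_{s}\cup{\mathbb C}_{s+1}$, which contains the three consecutive exponents $s,s+1,s+2$; the hypothesis $q\equiv 3\pmod 4$ is exactly what makes this coset computation work (the case $q\equiv 1\pmod 4$ is handled separately, with different parameters, in Theorem~\ref{mainclasV}). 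You instead build a properly constacyclic code with $r=(q-1)/2$, so the cosets live modulo $rn=q^{2}-1$, and you take ${\mathcal Z}={\mathbb C}_{1}\cup{\mathbb C}_{1+r}=\{1,\,1+r,\,1+2r,\,q(q+1)/2\}$, which again has four elements and contains the run $\beta^{1},\beta^{1+r},\beta^{1+2r}$; the verifications you defer are indeed routine, since $(1+r)q\equiv q(q+1)/2\pmod{q^{2}-1}$ and $q<q(q+1)/2<q^{2}-1$, and you should also record explicitly that $q(q+1)/2\neq 1+r$, as this is what guarantees $|{\mathcal Z}|=4$ and hence dimension exactly $n-4$. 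The proof skeleton (four-element defining set with three consecutive roots, BCH bound, Singleton) is the same, but your version never uses $q\equiv 3\pmod 4$: it yields $[n,n-4,d\geq 4]_{q}$ codes for every odd prime power (with $q=3$ degenerating to the cyclic case $r=1$, which is harmless since only existence is claimed), so it is more general, whereas the paper's choice $r=1$ keeps the entire computation modulo $2q+2$ and avoids introducing $\alpha$ and the larger modulus $q^{2}-1$. Like the paper, you only establish $d\geq 4$, so the "almost MDS" label (defect exactly $1$) is justified only modulo the possibility $d=5$; this looseness is shared with the paper's own proof and is not a defect of your argument relative to it.
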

\begin{proof}
It is easy to see that $\gcd (q, n)=1$ and $\
{\operatorname{ord}}_{n}(q) = 2$. We will construct a cyclic (
$\alpha =1$) code having these parameters. Putting $s=(q-1)/2$, we
will verify that the $q$-ary coset containing $s$ is given by
${\mathbb C}_{s}= \{s, s+2\}$. In fact, since $q=4k+3$ for some $k
\in {\mathbb N}$, we have $[\frac{q-1}{2}]q = 8k^{2}+10k+3$. Because
$8k = n-8$, it follows that $sq\equiv 2k+3 = s+2$ (mod $n$).
Further, the coset ${\mathbb C}_{(s+1)}$ has two elements and
${\mathbb C}_{s}$ and ${\mathbb C}_{(s+1)}$ are disjoint. Then the
BCH code with defining set ${\mathcal Z}={\mathbb C}_{s}\cup{\mathbb
C}_{(s+1)}$ has parameters $[n, n-4, d\geq 4]_{q}$, as desired.
\end{proof}

\begin{theorem}\label{mainclasIVA}
Let $q$ be a power of an odd prime and consider that $n=2q-2$.\\
a) If $q \geq 5$ then there exist almost MDS codes with parameters
$[n, n-4, d\geq 4]_{q}$.\\
b) If $q \geq 7$ then there exist codes with parameters $[n, n-7,
d\geq 6]_{q}$.
\end{theorem}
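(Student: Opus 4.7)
The plan is to mirror the proof of Theorem~\ref{mainclasIV}: construct cyclic codes ($\alpha = 1$) of length $n = 2q-2$ over $\mathbb{F}_q$ whose defining sets contain a sufficiently long run of consecutive integers, read off the dimension from the size of the defining set, and invoke the BCH bound for the minimum distance. As a preliminary step I would check $\gcd(q, n) = \gcd(q, 2) = 1$ (since $q$ is odd) and show that $\operatorname{ord}_{n}(q) = 2$: writing $q = n/2 + 1$ and using that $q$ is odd gives $qn/2 \equiv n/2 \pmod{n}$, hence $q^{2} \equiv n/2 + q = n + 1 \equiv 1 \pmod{n}$, so every $q$-ary cyclotomic coset modulo $n$ has size at most two.

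The next step would be to pin down the coset structure. Reducing $qs = (n/2 + 1)s = sn/2 + s$ modulo $n$, the parity of $s$ controls everything: if $s$ is even then $qs \equiv s \pmod{n}$ and ${\mathbb C}_s = \{s\}$, while if $s$ is odd then $qs \equiv s + n/2 \pmod{n}$ and ${\mathbb C}_s = \{s,\, s + q - 1\}$ (with both elements odd, since $n/2 = q - 1$ is even). In particular ${\mathbb C}_0 = \{0\}$, ${\mathbb C}_1 = \{1, q\}$, ${\mathbb C}_2 = \{2\}$, ${\mathbb C}_3 = \{3, q + 2\}$, and ${\mathbb C}_4 = \{4\}$.

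For part (a) I would then take $C$ to be the cyclic code with defining set ${\mathcal Z} = {\mathbb C}_0 \cup {\mathbb C}_1 \cup {\mathbb C}_2 = \{0, 1, 2, q\}$; for $q \geq 5$ these four elements are distinct and contained in $\{0, \ldots, n-1\}$, giving $\dim C = n - 4$, and the consecutive block $\{0, 1, 2\} \subseteq {\mathcal Z}$ yields $d \geq 4$ by the BCH bound. For part (b) I would instead take ${\mathcal Z} = \bigcup_{i=0}^{4} {\mathbb C}_i = \{0, 1, 2, 3, 4, q, q + 2\}$; for $q \geq 7$ the extra entries $q$ and $q + 2$ both exceed $4$ while staying below $n$, so the seven listed elements are pairwise distinct, producing $\dim C = n - 7$, and the five-term run $\{0, 1, 2, 3, 4\} \subseteq {\mathcal Z}$ gives $d \geq 6$ by the BCH bound.

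The only real hurdle is the combinatorial bookkeeping --- verifying $\operatorname{ord}_{n}(q) = 2$, computing the cosets, and checking that the listed entries of ${\mathcal Z}$ are distinct under the stated size hypotheses on $q$; once those calculations are carried out, the dimension count and the BCH bound finish both items exactly as in Theorem~\ref{mainclasIV}.
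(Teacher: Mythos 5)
Your proposal is correct and follows essentially the same route as the paper: determine the $q$-ary cosets mod $n=2q-2$ (even residues give singletons, odd residues give pairs $\{s, s+q-1\}$), take a union of cosets containing a run of consecutive exponents, count the defining set for the dimension, and apply the BCH bound. The only difference is in part (a), where you use ${\mathbb C}_{0}\cup{\mathbb C}_{1}\cup{\mathbb C}_{2}=\{0,1,2,q\}$ while the paper uses ${\mathbb C}_{2}\cup{\mathbb C}_{3}\cup{\mathbb C}_{4}=\{2,3,4,q+2\}$ — an immaterial shift of the run — and your part (b) uses exactly the paper's defining set $\bigcup_{i=0}^{4}{\mathbb C}_{i}$.
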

\begin{proof}
Note that the $q$-cosets ${\mathbb C}_{s}$ (mod $n$) for $s$ even
are singletons.\\
a) The cosets ${\mathbb C}_{3}$ has two elements. The BCH code with
defining set ${\mathcal Z}=\bigcup_{i=2}^{4}{\mathbb C}_{i}$ has the
specified parameters.\\
b) Since the cosets ${\mathbb C}_{1}$ and ${\mathbb C}_{3}$ are
disjoint and have two elements, the result follows by taking the BCH
code with defining set ${\mathcal Z}=\bigcup_{i=0}^{4}{\mathbb
C}_{i}$.
\end{proof}
\begin{remark}
Note that we have assumed that $q\geq 7$ in Item b) of
Theorem~\ref{mainclasIVA} because for $q=5$ one can get an $[8, 1,
d\geq 7]_{5}$ code and not an $[8, 1, d\geq 6]_{5}$ as stated in
such result.
\end{remark}

\begin{theorem}\label{mainclasV}
Assume that $q\equiv 1 \operatorname{(mod} 4 \operatorname{)}$ is a
power of an odd prime and let $n=2q+2$, Then there exist almost MDS
codes with parameters $[n, n-3, d\geq 3]_{q}$.
\end{theorem}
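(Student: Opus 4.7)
The plan is to mirror Theorem~\ref{mainclasIV} and construct a cyclic ($\alpha=1$) BCH code of length $n=2q+2$ whose defining set is the disjoint union of a singleton cyclotomic coset and a doubleton cyclotomic coset, arranged so that two of the three elements are consecutive integers. The routine preliminaries are $\gcd(q,n)=1$ and $\operatorname{ord}_{n}(q)=2$; the latter follows because $q^{2}-1=(q-1)(q+1)$ is divisible by $n=2(q+1)$ whenever $q$ is odd. Consequently every $q$-ary cyclotomic coset modulo $n$ has at most two elements.

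Taking $s=(q-1)/2$, I would verify two coset identities. First, ${\mathbb C}_{s+1}=\{s+1\}$: since $q$ is odd, $(q+1)/2=n/4$ is an integer, hence $(s+1)q=q(q+1)/2=qn/4$, and writing $q=4t+1$ (this is where the hypothesis $q\equiv 1\pmod 4$ enters) gives $qn/4=tn+n/4\equiv n/4=s+1\pmod n$. Second, a parallel modular computation yields $sq\equiv (3q+5)/2\pmod n$, so ${\mathbb C}_{s}=\{s,(3q+5)/2\}$ has two elements and is disjoint from ${\mathbb C}_{s+1}$ because the three values $(q-1)/2$, $(q+1)/2$, $(3q+5)/2$ are visibly pairwise distinct.

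Let $C$ be the cyclic code with defining set ${\mathcal Z}={\mathbb C}_{s}\cup{\mathbb C}_{s+1}$. Then $|{\mathcal Z}|=3$, so $\dim C=n-3$, and since ${\mathcal Z}$ contains the consecutive pair $\{s,s+1\}$, the BCH bound for cyclic codes yields $d\geq 3$. Combined with the Singleton bound $d\leq n-(n-3)+1=4$, the code $C$ has Singleton defect at most one, so it is almost MDS with parameters $[n,n-3,d\geq 3]_{q}$, as claimed. The only real obstacle is the singleton identity for ${\mathbb C}_{s+1}$: if instead $q\equiv 3\pmod 4$, then $qn/4\equiv 3n/4=s+1+n/2\pmod n$ and that coset splits into two elements, so $|{\mathcal Z}|$ grows to four and the construction would yield dimension $n-4$ rather than $n-3$. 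This is precisely what forces the residue class of $q$ modulo $4$ to dictate a separate treatment from Theorem~\ref{mainclasIV}.
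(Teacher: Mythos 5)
Your proposal is correct and follows essentially the same route as the paper: both rest on the singleton cyclotomic coset ${\mathbb C}_{(q+1)/2}$ (which is where $q\equiv 1\pmod 4$ is used) joined with an adjacent two-element coset, giving a three-element defining set containing two consecutive integers, hence dimension $n-3$ and $d\geq 3$ by the BCH bound. The only difference is immaterial: you adjoin the doubleton coset of $(q-1)/2$ while the paper uses that of $(q+1)/2+1$, and your explicit modular computations (e.g.\ $sq\equiv (3q+5)/2 \pmod n$) simply spell out what the paper leaves as ``easy to see.''
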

\begin{proof}
It is easy to see that the $q$-ary coset ${\mathbb
C}_{\frac{(q+1)}{2}}$ is given by ${\mathbb
C}_{\frac{(q+1)}{2}}=\left\{\frac{(q+1)}{2}\right\}$ and the coset
${\mathbb C}_{\left[\frac{(q+1)}{2}+1\right]}$ have two elements.
Choosing the BCH (cyclic) code with defining set ${\mathcal
Z}={\mathbb C}_{\frac{(q+1)}{2}}\cup{\mathbb
C}_{\left[\frac{(q+1)}{2}+1\right]}$, the result follows.
\end{proof}

\begin{remark}
Let us recall the well-known MDS Conjecture: Let $q$ be a prime
power. If there exists a nontrivial $[n, k, d]_{q}$ MDS code, then
$n \leq q+1$, except when $q$ is even and $k = 3$ or $k = q-1$, and
in this case one has $n \leq q+2$. Note that by applying
Theorems~\ref{mainclasIV}, \ref{mainclasIIIA} and \ref{mainclasV}
one can derive codes with good parameters. In fact, if the MDS
conjecture holds, those codes are the best ones known to exist.
\end{remark}

Let us discuss the results presented in this section:

\begin{itemize}
\item  The characterization of the cyclotomic cosets modulo $rn$, in
${\mathbb O}_{rn}$, is new. More precisely, the forms of the cosets
given in Lemmas $3.1$ to $3.6$ are new.

\item  Although the method of construction of MDS constacyclic codes
by applying that cyclotomic cosets given in Lemmas $3.1$ to $3.6$ is
new, the MDS constacyclic codes derived from Lemmas $3.1$, $3.3$,
$3.4$ and $3.6$ (respectively, Theorems $3.7$, $3.9$, $3.10$, $3.11$
and $3.12$) cannot be considered as totally new, as they are
isomorphic to cyclic codes as follows. Let $\beta$  be a primitive
element in ${\mathbb F}_{q}^{*}$ such that ${\beta}^{k}=\alpha$.
When $k$ is even, the map $x\longrightarrow {\beta}^{-k/2} x$ is an
isomorphism of ${\mathbb F}_{q}[x]/( x^{q+1}-1)$ onto ${\mathbb
F}_{q} [x]/( x^{q+1}-\alpha )$, so the constacylic codes from Lemma
$3.1$ are isomorphism to cyclic codes. The codes derived from Lemmas
$3.3$ and $3.4$ are isomorphic to cyclic codes by applying the maps
$x\longrightarrow {\beta}^{t} x$, where $ t\equiv
{(\frac{q+1}{2})}^{-1} (-k)$ mod $(q-1)$ and $x\longrightarrow
{\beta}^{-k} x$, respectively. For the codes derived from Lemma
$3.6$, the map $x\longrightarrow {\beta}^{s} x$, where $s\equiv
-2^{-1}k$ mod $(q-1)$, is an isomorphism with cyclic codes.

\item  The families of MDS codes obtained from Theorem~\ref{mainclasII}
seems to be new.

\item  The families of almost MDS codes derived from
Theorems~\ref{mainclasIV}, \ref{mainclasIVA} and \ref{mainclasV} are
new.
\end{itemize}


\section{New optimal convolutional codes}\label{secIV}

Recall that the \emph{generalized Singleton bound} (see
\cite[Theorem 2.2]{Rosenthal:1999}) of an $(n, k, \gamma;$ $m, d_{f}
{)}_{q}$ convolutional code is given by $ d_{f}\leq (n-k)[ \lfloor
\gamma/k \rfloor + 1 ] + \gamma +1$. If the parameters of a
convolutional code $C$ satisfy the generalized Singleton bound with
equality then $C$ is called maximum-distance-separable (MDS) or
optimal code. In this section we shall construct new optimal
convolutional codes. To proceed further, it is necessary to
construct a parity check matrix for the (classical) constacyclic
codes considered in the construction method proposed here.

${\mathbb F}_{q}$

\begin{remark}\label{rem1}
Let ${\mathcal B} =\{ b_{1}, \ldots, b_{l}\}$ be a basis of
${\mathbb F}_{q^{l}}$ over ${\mathbb F}_{q}$. If $u = (u_1,\ldots
,u_{n}) \in {\mathbb F}_{q^{l}}^{n}$ then one can write the vectors
$u_{i}$, $1\leq i\leq n$, as linear combinations of the elements of
${\mathcal B}$, that is, $u_{i} = u_{i1}b_{1} +\ldots +
u_{il}b_{l}$. Consider that $u^{(j)} = (u_{1j},\ldots, u_{nj})$ are
vectors in ${\mathbb F}_{q}^{n}$ with $1\leq j\leq l$. Then, if $v
\in {\mathbb F}_{q}^{n}$, one has $v\cdot u=0$ if and only if $v
\cdot u^{(j)} = 0$ for all $1\leq j\leq l$.
\end{remark}

The following result is a straightforward generalization of
\cite[Theorem 5.4]{LaGuardia:2013IV}. Since we have not seen an
explicit proof of such result in literature, we present it here:

\begin{theorem}\label{paritynega}
Assume that $q$ is a prime power with $\gcd (n, q)=1$, and $m= \
{\operatorname{ord}}_{rn}(q)$. Let $\beta$ be a primitive $rn$-th
root of unity in ${\mathbb F}_{q^{m}}$ and $b=1+ri$, for some $0
\leq i\leq n-1$. Then a parity-check matrix for a BCH constacyclic
code $C$ of length $n$ and designed distance $\delta$ is given by

\begin{eqnarray*}
H_{\delta , b} =  \left[
\begin{array}{ccccc}
1 & {{\beta}^{b}} & {{\beta}^{2b}} & \cdots & {{\beta}^{(n-1)b}} \\
1 & {{\beta}^{(b+r)}} & {{\beta}^{2(b+r)}} & \cdots & {{\beta}^{(n-1)(b+r)}}\\
1 & {{\beta}^{(b+2r)}} & {{\beta}^{2(b+2r)}} & \cdots & {{\beta}^{(n-1)(b+2r)}}\\
\vdots & \vdots & \vdots & \vdots & \vdots\\
1 & {\beta}^{[b+r(\delta-2)]} & {\beta}^{2[b+r(\delta-2)]} & \cdots & {\beta}^{(n-1)[b+r(\delta-2)]}\\
\end{array}
\right],
\end{eqnarray*}
where each entry is replaced by the corresponding column of $m$
elements from ${\mathbb F}_{q}$ and then removing any linearly
dependent rows.
\end{theorem}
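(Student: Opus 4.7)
The plan is to show that the $\mathbb{F}_{q^{m}}$-matrix $H_{\delta,b}$ in the statement encodes exactly the defining conditions $c(\beta^{b+rj})=0$ for $j=0,1,\ldots,\delta-2$, and then descend to $\mathbb{F}_{q}$ using the basis expansion from Remark~\ref{rem1}.

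First I would recall from Section~\ref{secIIA} that a codeword ${\bf c}=(c_{0},c_{1},\ldots,c_{n-1})\in\mathbb{F}_{q}^{n}$ belongs to the BCH constacyclic code $C$ of designed distance $\delta$ with initial exponent $b=1+ri$ if and only if the polynomial $c(x)=\sum_{l=0}^{n-1}c_{l}x^{l}$ vanishes at $\beta^{b},\beta^{b+r},\ldots,\beta^{b+r(\delta-2)}$. Writing out $c(\beta^{b+rj})=\sum_{l=0}^{n-1}c_{l}\beta^{l(b+rj)}=0$ for each $j=0,1,\ldots,\delta-2$ shows that these conditions are exactly ${\bf c}\,H_{\delta,b}^{t}=\mathbf{0}$ with $H_{\delta,b}$ as displayed. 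Thus $H_{\delta,b}$ is a parity-check matrix for $C$ over the extension field $\mathbb{F}_{q^{m}}$.

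Next I would pass to $\mathbb{F}_{q}$ by invoking Remark~\ref{rem1}. Fix a basis $\mathcal{B}=\{b_{1},\ldots,b_{m}\}$ of $\mathbb{F}_{q^{m}}$ over $\mathbb{F}_{q}$ and expand every entry $\beta^{l(b+rj)}$ as $\sum_{k=1}^{m}a_{l,j,k}\,b_{k}$ with $a_{l,j,k}\in\mathbb{F}_{q}$. Replacing each entry by the column $(a_{l,j,1},\ldots,a_{l,j,m})^{t}$ produces a matrix $\widetilde{H}$ with entries in $\mathbb{F}_{q}$ whose row space consists of the $\mathbb{F}_{q}$-coordinates, in the basis $\mathcal{B}$, of the rows of $H_{\delta,b}$. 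Since ${\bf c}\in\mathbb{F}_{q}^{n}$, Remark~\ref{rem1} yields
\begin{equation*}
{\bf c}\,H_{\delta,b}^{t}=\mathbf{0}\ \text{ in }\ \mathbb{F}_{q^{m}}^{\delta-1}\ \ \Longleftrightarrow\ \ {\bf c}\,\widetilde{H}^{t}=\mathbf{0}\ \text{ in }\ \mathbb{F}_{q}^{m(\delta-1)}.
\end{equation*}
Consequently $\widetilde{H}$ is an $\mathbb{F}_{q}$-parity-check matrix for $C$; deleting any rows that are linearly dependent over $\mathbb{F}_{q}$ preserves the null space, so the reduced matrix is still a parity-check matrix, and this is precisely the matrix described in the theorem statement.

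There is no real obstacle here: the only subtlety is to observe that the equivalence of the row-by-row scalar equations over $\mathbb{F}_{q^{m}}$ with the component-wise equations over $\mathbb{F}_{q}$ requires ${\bf c}$ to have entries in the base field, which is exactly the case for a classical constacyclic code over $\mathbb{F}_{q}$. Pruning linearly dependent rows at the end is justified by the elementary fact that the left null space of a matrix depends only on its row space; this is what yields a genuine (full row rank) parity-check matrix of the expected size $(n-k)\times n$.
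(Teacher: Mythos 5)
Your argument is correct and follows essentially the same route as the paper's own proof: you express membership in $C$ as the vanishing conditions $c(\beta^{b+rj})=0$, write these as the matrix equation over ${\mathbb F}_{q^{m}}$, and then descend to ${\mathbb F}_{q}$ via the basis expansion of Remark~\ref{rem1}, with the removal of dependent rows justified by invariance of the null space. Your write-up simply spells out in more detail the steps the paper compresses into its final sentence.
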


\begin{proof}
Assume that ${\bf c} =(c_0, c_1, \ldots , c_{n-1}) \in C$, or
equivalently, $c(x) = c_{0} + c_{1}x + \ldots + c_{n-1}x^{n-1} \in
C$. Thus we have ${\bf c}({{\beta}^{b}})={\bf
c}({{\beta}^{b+r}})={\bf c}({{\beta}^{b+2r}})= \ldots ={\bf
c}({{\beta}^{[b+r(\delta-2)]}})=0$; so it follows that

\begin{eqnarray*}
\left[
\begin{array}{ccccc}
1 & {{\beta}^{b}} & {{\beta}^{2b}} & \cdots & {{\beta}^{(n-1)b}} \\
1 & {{\beta}^{(b+r)}} & {{\beta}^{2(b+r)}} & \cdots & {{\beta}^{(n-1)(b+r)}}\\
1 & {{\beta}^{(b+2r)}} & {{\beta}^{2(b+2r)}} & \cdots & {{\beta}^{(n-1)(b+2r)}}\\
\vdots & \vdots & \vdots & \vdots & \vdots\\
1 & {\beta}^{[b+r(\delta-2)]} & {\beta}^{2[b+r(\delta-2)]} & \cdots & {\beta}^{(n-1)[b+r(\delta-2)]}\\
\end{array}
\right]\cdot \left[
\begin{array}{c}
c_0\\
c_1\\
c_2\\
\vdots\\
c_{n-1}\\
\end{array}
\right]= \left[
\begin{array}{c}
0\\
0\\
\vdots\\
0\\
\end{array}
\right]_{({\delta}-1, 1)}.
\end{eqnarray*}
From Remark~\ref{rem1} and by definition of BCH constacyclic codes,
the result follows.
\end{proof}

Now we are ready to show the main results of this section:

\begin{theorem}\label{mainI}
Assume that all hypotheses of Lemma~\ref{cyclo1} hold. Then there
exist MDS convolutional codes with parameters $(n, n-2i+1, 2 ; 1,
2i+2 {)}_{q}$, where $2\leq i \leq n/2 - 2$.
\end{theorem}

\begin{proof}
Let $C_2$ be the $\alpha$-constacyclic BCH code with parameters $[n,
n-2i-1,$ $2i+2{]}_{q}$ constructed in Theorem~\ref{mainclasI}. By
Theorem~\ref{paritynega}, a parity check matrix $H_{C_2}$ of $C_2$
is obtained from the matrix
\begin{eqnarray*}
H_{2} = \left[
\begin{array}{ccccc}
1 & {{\beta}^{s}} & {{\beta}^{2s}} & \cdots & {{\beta}^{(n-1)s}} \\
1 & {{\beta}^{(s+r)}} & {{\beta}^{2(s+r)}} & \cdots & {{\beta}^{(n-1)(s+r)}}\\
1 & {{\beta}^{(s+2r)}} & {{\beta}^{2(s+2r)}} & \cdots & {{\beta}^{(n-1)(s+2r)}}\\
\vdots & \vdots & \vdots & \vdots & \vdots\\
1 & {\beta}^{(s+ri)} & {{\beta}^{2(s+ri)}} & \cdots & {\beta}^{(n-1)(s+ri)}\\
\end{array}
\right]
\end{eqnarray*}
by expanding each entry as a column vector (containing $2$ rows)
with respect to some ${\mathbb F}_{q}-$basis $\mathcal B$ of
${\mathbb F}_{q^2}$ and then removing one linearly dependent row.

Next we assume that $C_1$ is the $\alpha$-constacyclic BCH code of
length $n$ over ${\mathbb F}_{q}$ generated by $\langle g_1
(x)\rangle =\langle {M}^{(s)}(x) {M}^{(s+r)}(x) \cdot \ldots \cdot
{M}^{[s+r(i-1)]}(x) \rangle$. Similarly, by
Theorem~\ref{paritynega}, $C_1$ has a parity check matrix $H_{C_1}$
derived from the matrix
\begin{eqnarray*}
H_{1} = \left[
\begin{array}{ccccc}
1 & {{\beta}^{s}} & {{\beta}^{2s}} & \cdots & {{\beta}^{(n-1)s}} \\
1 & {{\beta}^{(s+r)}} & {{\beta}^{2(s+r)}} & \cdots & {{\beta}^{(n-1)(s+r)}}\\
1 & {{\beta}^{(s+2r)}} & {{\beta}^{2(s+2r)}} & \cdots & {{\beta}^{(n-1)(s+2r)}}\\
\vdots & \vdots & \vdots & \vdots & \vdots\\
1 & {\beta}^{[s+r(i-1)]} & {\beta}^{2[s+r(i-1)]} & \cdots & {\beta}^{(n-1)[s+r(i-1)]}\\
\end{array}
\right]
\end{eqnarray*}
by expanding each entry as a column vector with respect to $\mathcal
B$ (already done, since $H_1$ is a submatrix of $H_{2}$) and then
removing the unique linearly dependent row. It is easy to see that
$C_1$ is an ${[n, n-2i+1, 2i]}_{q}$ MDS code.

Now, consider $C_0$ be the ${[n, n-2, d_0 \geq 2]}_{q}$
$\alpha$-constacyclic BCH code generated by ${M}^{(s+ri)}(x)$. A
parity check matrix $H_{C_0}$ of $C_0$ is given by expanding the
entries of the matrix
\begin{eqnarray*}
H_0 = \left[
\begin{array}{ccccc}
1 & {{\alpha}^{(s+2i)}} & {{\alpha}^{2(s+2i)}} & \cdots & {{\alpha}^{(n-1)(s+2i)}} \\
\end{array}
\right]
\end{eqnarray*}
with respect to $\mathcal B$ (already done, since $H_0$ is a
submatrix of $H_{2}$).

Further, let us consider that $V$ is the convolutional code
generated by the reduced basic (Theorem~\ref{A} Item (a)) generator
matrix
\begin{eqnarray*}
G(D)=\tilde H_{C_1}+ \tilde H_{C_0} D,
\end{eqnarray*}
where $\tilde H_{C_1} = H_{C_1}$ and $\tilde H_{C_0}$ is obtained
from $H_{C_0}$ by adding zero-rows at the bottom such that $\tilde
H_{C_0}$ has the number of rows of $H_{C_1}$ in total. By
construction, $V$ is a unit-memory convolutional code of dimension
$2i-1$ and degree ${\gamma}_{V} = 2$. Thus the dual $V^{\perp}$ of
the convolutional code $V$ has dimension $n-2i+1$ and degree $2$.
From Theorem~\ref{A} Item (b), the free distance of $V^{\perp}$ is
bounded by $\min \{ d_0 + d_1 , d_2 \} \leq d_{f}^{\perp} \leq d_2$.
From construction one has $d_2 = 2i+2$, $d_1 = 2i$ and $d_0 \geq 2$,
so $V^{\perp}$ has parameters $(n, n-2i+1, 2; 1, 2i+2)_{q}$.
Applying the generalized Singleton bound we conclude that
$V^{\perp}$ is MDS, as required.
\end{proof}


\begin{theorem}\label{mainII}
Assume that all hypotheses of Lemma~\ref{cyclo2} hold. Then there
exist MDS convolutional codes with parameters $(n, n-2i, 2 ; 1,
2i+3{)}_{q}$, where $2\leq i\leq n/2 - 2$.
\end{theorem}

\begin{proof}
We omit the proof since it follows the same line to that of
Theorem~\ref{mainI}.
\end{proof}

\begin{theorem}\label{mainIII}
Assume that all hypotheses of Lemma~\ref{cyclo3} hold. Then there
exist MDS convolutional codes with parameters $(n, n-2i+1, 2 ; 1,
2i+2 {)}_{q}$, where $2\leq i\leq \frac{(n-1)}{2}-1$.
\end{theorem}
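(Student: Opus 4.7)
The plan is to reproduce, essentially verbatim, the construction used in the proof of Theorem~\ref{mainI}, with the singleton coset $\mathbb{C}_{s}$ supplied by Lemma~\ref{cyclo1} replaced by the singleton coset $\mathbb{C}_{n}$ supplied by Lemma~\ref{cyclo3}. The skeleton is the same: build two nested $\alpha$-constacyclic BCH codes $C_{2}\subset C_{1}$ together with a short auxiliary code $C_{0}$, assemble their parity-check matrices into a unit-memory polynomial generator matrix, and read off the parameters of the resulting convolutional code's dual via Theorem~\ref{A} and the generalized Singleton bound.

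Concretely, I would take $C_{2}$ to be the $[n,n-2i-1,2i+2]_{q}$ MDS constacyclic code furnished by Theorem~\ref{mainclasIII}, whose defining set is $\mathcal{Z}_{2}=\bigcup_{l=0}^{i}\mathbb{C}_{n-rl}$. Lemma~\ref{cyclo3} guarantees that this is the disjoint union of the singleton $\{n\}$ and $i$ two-element cosets, corresponding to the string of $2i+1$ consecutive zeros $\beta^{n-ri},\beta^{n-r(i-1)},\ldots,\beta^{n+ri}$. Dropping the outer coset $\mathbb{C}_{n-ri}$ gives $C_{1}$, a constacyclic code with defining set of size $2i-1$ whose BCH bound combined with the Singleton bound certifies it as $[n,n-2i+1,2i]_{q}$ MDS. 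Finally, let $C_{0}$ be the constacyclic code whose defining set is the single coset $\mathbb{C}_{n-ri}$; it has parameters $[n,n-2,d_{0}\geq 2]_{q}$.

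Next I would invoke Theorem~\ref{paritynega} to produce parity-check matrices $H_{C_{0}},H_{C_{1}},H_{C_{2}}$ by expanding each entry over an $\mathbb{F}_{q}$-basis of $\mathbb{F}_{q^{2}}$ (since $\operatorname{ord}_{rn}(q)=2$ in this situation) and deleting one redundant row. Because the three defining sets are nested, $H_{C_{1}}$ is the top block of $H_{C_{2}}$ and $H_{C_{0}}$ the bottom rows, so no further bookkeeping is required. Padding $H_{C_{0}}$ with zero rows up to the row count of $H_{C_{1}}$ yields $\tilde{H}_{C_{0}}$, and $G(D)=H_{C_{1}}+\tilde{H}_{C_{0}}D$ generates, by Theorem~\ref{A}(a), a unit-memory convolutional code $V$ of dimension $2i-1$ and degree $2$. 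Its dual $V^{\perp}$ then has dimension $n-2i+1$ and degree $2$; Theorem~\ref{A}(b) combined with $d_{0}\geq 2$, $d_{1}=2i$ and $d_{2}=2i+2$ pins down $d_{f}^{\perp}=2i+2$. A direct evaluation of the generalized Singleton bound with $k=n-2i+1$ and $\gamma=2$ also yields $(n-k)(\lfloor\gamma/k\rfloor+1)+\gamma+1=2i+2$, so $V^{\perp}$ is MDS with the claimed parameters.

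I do not anticipate any genuine obstacle: once Theorem~\ref{mainI} is in hand, the whole argument is a change of notation. The only points that deserve explicit verification are that $\mathbb{C}_{n-ri}$ really is a two-element coset and that $C_{1}$ and $C_{2}$ remain nontrivial throughout the range $2\leq i\leq (n-1)/2-1$, both of which are immediate from Lemma~\ref{cyclo3}.
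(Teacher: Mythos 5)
Your proposal is correct and follows essentially the same route as the paper: the paper's proof simply specifies $C_2=\langle M^{(n)}(x)M^{(n+r)}(x)\cdots M^{(n+ri)}(x)\rangle$, $C_1=\langle M^{(n)}(x)\cdots M^{(n+ri-r)}(x)\rangle$ and $C_0=\langle M^{(n+ri)}(x)\rangle$ (the same codes as yours, since ${\mathbb C}_{n+rl}={\mathbb C}_{n-rl}$) and then invokes the procedure of Theorem~\ref{mainI}, which is exactly the construction you reproduce, including the use of Theorem~\ref{paritynega}, Theorem~\ref{A} and the generalized Singleton bound.
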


\begin{proof}
Let $C_2$, $C_1$ and $C_0$ be $\alpha$-constacyclic BCH codes of
length $n$ over ${\mathbb F}_{q}$ generated, respectively, by the
product of the minimal polynomials
$$ C_2 = \langle {M}^{(n)}(x)
{M}^{(n+r)}(x) \cdot \ldots \cdot {M}^{(n+ri)}(x) \rangle,$$
$$C_1 = \langle {M}^{(n)}(x) {M}^{(n+r)}(x) \cdot \ldots \cdot
{M}^{(n+ri-r)}(x) \rangle$$ and $$C_0 =
\langle{M}^{(n+ri)}(x)\rangle,$$ where $2\leq i\leq
\frac{(n-1)}{2}-1$. Applying the same procedure shown in the proof
of Theorem~\ref{mainI}, the result follows.
\end{proof}

\begin{remark}
If we assume that all hypotheses of Lemma~\ref{cyclo4} hold then one
can obtain more optimal convolutional codes.
\end{remark}

\begin{theorem}\label{mainIIIA}
Assume all hypotheses of Theorem~\ref{mainclasIIIA} hold, with
$t\geq 3$. Then there exist MDS convolutional codes with parameters
$(n, n-2i, 2 ; 1, 2i+3 {)}_{q}$, where $2\leq i\leq
\frac{(n-1)}{2}-2$.
\end{theorem}

\begin{proof}
Similar to that of Theorem~\ref{mainI}.
\end{proof}

\begin{theorem}\label{mainIIIB}
Assume all hypotheses of Theorem~\ref{mainclasIIIB} hold, with
$t\geq 3$. Then there exist MDS convolutional codes with parameters
$(n, n-2i+1, 2 ; 1, 2i+2 {)}_{q}$, where $2\leq i\leq
\frac{(n-1)}{2}-2$.
\end{theorem}

\begin{proof}
Similar to that of Theorem~\ref{mainI}.
\end{proof}

\begin{theorem}\label{mainIV}
Let $q$ be a prime power such that $q-1=rn$ and let $\alpha \in {
\mathbb F}_{q}^{*}$ be an element of order $r\geq 1$. Then there
exist MDS convolutional codes with parameters $(n, n-c_1, c_2; 1,
i+2 {)}_{q}$, where $0\leq i\leq n-2$ and $c_1 , c_2$ are positive
integers with $i+1=c_1 + c_2$ and $c_1 \geq c_2$.
\end{theorem}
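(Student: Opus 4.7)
The hypothesis $q-1=rn$ forces $q\equiv 1\pmod{rn}$, so $\operatorname{ord}_{rn}(q)=1$; every $q$-cyclotomic coset modulo $rn$ is a singleton and a primitive $rn$-th root $\beta$ already lies in $\mathbb{F}_{q}$. This is the Reed--Solomon regime: no field expansion is required in Theorem~\ref{paritynega}, and any subset of $\mathbb{O}_{rn}$ is a legitimate defining set. The plan is to reuse the two-layer BCH construction of Theorem~\ref{mainI}, but now with complete freedom in distributing the zeros between the two layers.

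\emph{Block codes.} Let $C_{2}$ be the $\alpha$-constacyclic BCH code with defining set $\mathcal{Z}=\{1,1+r,\ldots,1+ir\}$. Since $|\mathcal{Z}|=i+1$, the BCH bound together with the Singleton bound give $C_{2}$ as an $[n,n-i-1,i+2]_{q}$ MDS code. Split $\mathcal{Z}$ into its first $c_{1}$ elements and its last $c_{2}$ elements and let $C_{1},C_{0}$ be the corresponding BCH codes; the same argument makes them $[n,n-c_{1},c_{1}+1]_{q}$ and $[n,n-c_{2},c_{2}+1]_{q}$ MDS codes, respectively.

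\emph{Convolutional code.} By Theorem~\ref{paritynega}, $H_{C_{1}}$ and $H_{C_{0}}$ are Vandermonde matrices with $c_{1}$ and $c_{2}$ rows over $\mathbb{F}_{q}$ (no linear dependencies to remove, since $m=1$). Pad $H_{C_{0}}$ at the bottom with $c_{1}-c_{2}\geq 0$ zero rows to obtain $\tilde{H}_{C_{0}}$, set $G(D)=H_{C_{1}}+\tilde{H}_{C_{0}}D$, and let $V$ be the convolutional code it generates. Theorem~\ref{A}(a) gives that $G(D)$ is a reduced basic encoder; hence $V$ has dimension $c_{1}$, memory $1$, and degree $\gamma=c_{2}$, because only the first $c_{2}$ rows carry a nonzero $D$-term. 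Theorem~\ref{A}(b), applied with $d_{0}=c_{1}+1$, $d_{m}=c_{2}+1$ and $d=i+2$, yields
\[
i+2 \;=\; \min\{(c_{1}+1)+(c_{2}+1),\, i+2\} \;\leq\; d_{f}^{\perp} \;\leq\; i+2,
\]
so $V^{\perp}$ has parameters $(n,n-c_{1},c_{2};1,i+2)_{q}$.

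\emph{MDS check and obstacle.} Since $c_{1}\geq 1$ and $c_{1}+c_{2}=i+1\leq n-1$ force $c_{2}<n-c_{1}$, the floor $\lfloor c_{2}/(n-c_{1})\rfloor$ vanishes and the generalized Singleton bound for $V^{\perp}$ collapses to $c_{1}+c_{2}+1=i+2$, which is attained. The only subtlety I would flag is the accounting step: verifying that the dual code shares the degree of $V$ (both equal $c_{2}$) and that the constraint length computation $\gamma=c_{2}$ is correct once zero rows are appended at the bottom of $\tilde{H}_{C_{0}}$. Both are standard and require no new combinatorial input, because the hypothesis $q-1=rn$ has already trivialized the coset structure and the freedom to choose $(c_{1},c_{2})$ with $c_{1}+c_{2}=i+1$ is immediate.
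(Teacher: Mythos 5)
Your proof is correct and follows essentially the same route as the paper: the hypothesis $q-1=rn$ makes all cosets singletons, one takes the $[n,n-i-1,i+2]_q$ MDS constacyclic code with consecutive defining set, splits its parity-check matrix into $c_1$ and $c_2$ rows, and applies Theorem~\ref{A} together with the generalized Singleton bound to the dual unit-memory code, exactly as in the method of Theorem~\ref{mainI} that the paper invokes. In fact you supply the bookkeeping (the $[n,n-c_1,c_1+1]_q$ and $[n,n-c_2,c_2+1]_q$ layer codes, the bound $\min\{d_0+d_m,d\}\leq d_f^{\perp}\leq d$, and the degree count $\gamma=c_2$) that the paper's own proof leaves as ``easy to see.''
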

\begin{proof}
Since ${\operatorname{ord}}_{rn}(q) = 1$, then all the $q$-cosets
${\mathbb C}_{1+ri}$ (mod $rn$) are singletons. Consider $C$ to be
the $\alpha$-constacyclic code with defining set ${ \mathcal Z} =
\bigcup_{l=0}^{i}{\mathbb C}_{(1+rl)}$, where $0\leq i\leq n-2$; $C$
has parameters $[n, n-i-1, i+2]_{q}$. We next construct a
convolutional code $V$ with parameters $(n, c_1, c_2; 1,
d_{f})_{q}$. It is easy to see that its dual $V^{\perp}$ has
parameters $(n, n-c_1, c_2; 1, i+2)_{q}$ and its is a MDS code. Note
that when $r =1$, $C$ is a Reed-Solomon code over ${\mathbb F}_{q}$
(see \cite{Klappi:2007}).
\end{proof}

\begin{theorem}\label{mainV}
Assume all the hypotheses of Theorem~\ref{mainclasIV} hold. Then
there exist almost MDS convolutional codes with parameters $(n, n-2,
2 ; 1, 4{)}_{q}$.
\end{theorem}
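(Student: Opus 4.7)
The plan is to adapt the three-code construction from the proof of Theorem~\ref{mainI} to the cyclic ($\alpha=1$, $r=1$) setting of Theorem~\ref{mainclasIV}. Fix $q\equiv 3\pmod{4}$, $n=2q+2$, and $s=(q-1)/2$. By the proof of Theorem~\ref{mainclasIV}, the $q$-ary cyclotomic cosets modulo $n$ satisfy $\mathbb{C}_{s}=\{s,s+2\}$ and $\mathbb{C}_{s+1}$ is a disjoint two-element coset, and the cyclic code $C_{2}$ with defining set $\mathcal{Z}_{2}=\mathbb{C}_{s}\cup\mathbb{C}_{s+1}$ is the almost MDS code $[n,n-4,4]_{q}$; in particular $d_{2}=4$, the lower bound coming from the three consecutive zeros $\beta^{s},\beta^{s+1},\beta^{s+2}$ via the BCH bound and the matching upper bound from the almost-MDS property asserted in Theorem~\ref{mainclasIV}.

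First I introduce two auxiliary cyclic BCH subcodes: $C_{1}$ with defining set $\mathcal{Z}_{1}=\mathbb{C}_{s}$, and $C_{0}$ with defining set $\mathcal{Z}_{0}=\mathbb{C}_{s+1}$. Each has dimension $n-2$ and minimum distance at least $2$, since in either case the parity-check matrix provided by Theorem~\ref{paritynega} has no zero column (every column is a nonzero power of $\beta$), which already excludes weight-one codewords. Because $\operatorname{ord}_{n}(q)=2$, after expanding over a fixed $\mathbb{F}_{q}$-basis of $\mathbb{F}_{q^{2}}$ both $H_{C_{1}}$ and $H_{C_{0}}$ are matrices of exactly $\kappa=2$ rows over $\mathbb{F}_{q}$; using the same basis throughout, $H_{C_{2}}$ splits in block form as
\[
H_{C_{2}}=\begin{pmatrix} H_{C_{1}}\\ H_{C_{0}} \end{pmatrix},
\]
fitting precisely the setting of Theorem~\ref{A} with $H_{0}=H_{C_{1}}$ and $H_{1}=H_{C_{0}}$.

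I then form the polynomial matrix $G(D)=H_{C_{1}}+H_{C_{0}}D$; no zero-row padding is required because both blocks already share $\kappa=2$ rows, and since no row of $H_{C_{0}}$ is zero every row of $G(D)$ has $D$-degree exactly $1$. By Theorem~\ref{A}(a), $G(D)$ is a reduced basic generator matrix for a convolutional code $V$ of dimension $2$, memory $1$, and total degree $\gamma=2$. Its dual $V^{\perp}$ therefore has parameters $(n,n-2,2;1,d_{f}^{\perp})_{q}$, and Theorem~\ref{A}(b) yields $\min\{d_{0}+d_{1},d_{2}\}\leq d_{f}^{\perp}\leq d_{2}$; with $d_{0},d_{1}\geq 2$ and $d_{2}=4$ this pinches $d_{f}^{\perp}=4$. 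The generalized Singleton bound then gives $d_{f}\leq 2\cdot 1+2+1=5$ for parameters $(n,n-2,2;1,\cdot)_{q}$, so $V^{\perp}$ has Singleton defect $1$ and is almost MDS, as required. The only point that demands a bit of care is making sure that the $\mathbb{F}_{q}$-expansion of $H_{C_{2}}$ genuinely decomposes into the two blocks $H_{C_{1}}$ and $H_{C_{0}}$ without spurious cross-block dependencies; this is automatic because $\mathbb{C}_{s}$ and $\mathbb{C}_{s+1}$ are disjoint, so the $\mathbb{F}_{q}$-spans of the two blocks meet trivially.
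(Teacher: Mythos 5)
Your proposal is correct and follows essentially the same route as the paper: the paper's proof is precisely a pointer to applying the splitting-and-dual construction of Theorem~\ref{mainI} (via Theorem~\ref{paritynega} and Theorem~\ref{A}) to the $[n,n-4,d\geq 4]_{q}$ code of Theorem~\ref{mainclasIV}, which is exactly what you carry out in detail. Your pinching of $d_{f}^{\perp}=4$ relies on $d_{2}=4$ exactly, i.e., on the almost-MDS assertion of Theorem~\ref{mainclasIV} rather than just the proved bound $d\geq 4$, but this is the same reliance implicit in the paper's argument.
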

\begin{proof}
It suffices to consider the codes constructed in
Theorem~\ref{mainclasIV}, then proceeding similarly as in the proof
of Theorem~\ref{mainI}.
\end{proof}

\begin{remark}
It is interesting to observe that the convolutional codes derived
from Theorem~\ref{mainIV} are almost MDS in the sense that their
defects are $1$ with respect to the generalized Singleton bound.
These codes seem to be new.
\end{remark}

\begin{theorem}\label{mainVI}
Assume all the hypotheses of Theorem~\ref{mainclasIVA} hold.\\
a) If $q\geq 5$, there exist almost MDS convolutional codes
with parameters $(n, n-3, 1; 1, 4{)}_{q}$.\\
b) If $q \geq 7$, there exist convolutional codes with parameters
$(n, n-4, 3; 1, d_{f}\geq 6{)}_{q}$.
\end{theorem}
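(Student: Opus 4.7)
The plan is to follow the blueprint of the proof of Theorem~\ref{mainI} in both items. For each of the two cyclic block codes $C_2$ produced by Theorem~\ref{mainclasIVA}, I would split its defining set $\mathcal{Z}_2$ into two disjoint pieces $\mathcal Z_1$ and $\mathcal Z_0$, build the corresponding super-codes $C_1, C_0 \supset C_2$ through the parity-check construction of Theorem~\ref{paritynega} (with $r = 1$), assemble the polynomial matrix $G(D) = \tilde H_{C_1} + \tilde H_{C_0} D$, and certify it as a reduced basic generator by Theorem~\ref{A}(a). Theorem~\ref{A}(b) then yields the bound on the free distance of the dual convolutional code $V^{\perp}$, whose length, dimension, and degree can be read off directly from the splitting.

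For item (a), I would start from the $[n, n-4, d_2 \geq 4]_q$ code with defining set $\mathcal Z_2 = \bigcup_{i=2}^{4} \mathbb{C}_i$ and split it as $\mathcal Z_1 = \mathbb{C}_2 \cup \mathbb{C}_3$ and $\mathcal Z_0 = \mathbb{C}_4$. The corresponding super-codes are $[n, n-3, d_1 \geq 3]_q$ (from the consecutive roots $\{2,3\}$) and $[n, n-1, d_0 \geq 2]_q$; their parity-check matrices have $3$ and $1$ rows respectively, and padding $\tilde H_{C_0}$ to $3$ rows produces a unit-memory $V$ of dimension $3$ and degree $1$. Theorem~\ref{A}(b) then forces $4 = \min\{d_0 + d_1, d_2\} \leq d_f^{\perp} \leq d_2 = 4$, so $V^{\perp}$ has parameters $(n, n-3, 1; 1, 4)_q$. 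Since the generalized Singleton bound for these parameters equals $5$, the Singleton defect is $1$ and $V^{\perp}$ is almost MDS, as claimed.

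For item (b), I would start from the $[n, n-7, d_2 \geq 6]_q$ code with $\mathcal Z_2 = \bigcup_{i=0}^{4} \mathbb{C}_i$ and split as $\mathcal Z_1 = \bigcup_{i=0}^{2} \mathbb{C}_i$ (yielding $d_1 \geq 4$ from the consecutive roots $\{0, 1, 2\}$, with a parity check of $4$ rows) and $\mathcal Z_0 = \mathbb{C}_3 \cup \mathbb{C}_4$ (yielding $d_0 \geq 3$ from $\{3, 4\}$, with a parity check of $3$ rows). Padding $\tilde H_{C_0}$ to $4$ rows gives a unit-memory $V$ of dimension $4$ and degree $3$, so $V^{\perp}$ has parameters $(n, n-4, 3; 1, d_f^{\perp})_q$ with $d_f^{\perp} \geq \min\{d_0 + d_1, d_2\} \geq \min\{7, 6\} = 6$, exactly as stated.

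The only delicate point --- and the step I would watch most carefully --- is verifying that after the $\mathbb{F}_q$-basis expansion used in Theorem~\ref{paritynega}, the blocks $\tilde H_{C_1}$ and $\tilde H_{C_0}$ have precisely the row counts predicted by the coset cardinalities, since both the dimension of $V$ and the degree contributed by $\tilde H_{C_0}$ rely on this bookkeeping. This, however, is immediate from the disjointness of the cyclotomic cosets computed in the proof of Theorem~\ref{mainclasIVA} together with Theorem~\ref{paritynega} applied at $r = 1$, so no genuinely new difficulty arises beyond what has already been handled in the proof of Theorem~\ref{mainI}.
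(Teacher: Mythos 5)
Your proposal is correct and follows exactly the route the paper intends: its own proof of this theorem is just ``similar to that of Theorem~\ref{mainI}'', and your explicit splittings ($\mathbb{C}_2\cup\mathbb{C}_3$ versus $\mathbb{C}_4$ for item (a), and $\mathbb{C}_0\cup\mathbb{C}_1\cup\mathbb{C}_2$ versus $\mathbb{C}_3\cup\mathbb{C}_4$ for item (b)) together with Theorems~\ref{paritynega} and \ref{A} are precisely the intended instantiation, with the row/degree bookkeeping done correctly. The only caveat, inherited from the paper itself, is that the exact value $d_2=4$ in item (a) rests on the ``almost MDS'' assertion of Theorem~\ref{mainclasIVA}, whose proof only exhibits $d\geq 4$.
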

\begin{proof}
Similar to that of Theorem~\ref{mainI}.
\end{proof}

\begin{remark}
Note that the new convolutional codes derived from
Theorem~\ref{mainVI} - Item b), have Singleton defect at most $2$,
hence they are also good codes.
\end{remark}

Let us now give a brief discussion of this section.

\begin{itemize}
\item  For every $q=p^{t}$, where $p$ is an odd prime and $t$ is an odd
positive integer, the families of MDS convolutional codes
constructed in Theorems~\ref{mainI}, \ref{mainII}, \ref{mainIII} are
new.

\item  The families of MDS convolutional codes constructed in
Theorems~\ref{mainIIIB} and \ref{mainIV} are new.

\item  Theorems~\ref{mainI}, \ref{mainII} and \ref{mainIII} are
generalizations of the results shown in \cite{LaGuardia:2013IV}
(w.r.t. classical convolutional codes), in the sense that here we
construct arbitrary $\alpha$-constacyclic code, whereas in
\cite{LaGuardia:2013IV}, only negacyclic codes (i.e., $\alpha = -1$)
are generated. In the same context, Theorem~\ref{mainIIIA} is a
generalization of the results shown in \cite{LaGuardia:2013II}.

\item  The families of almost MDS convolutional codes
constructed in Theorems~\ref{mainV} and \ref{mainVI} are new.
\end{itemize}

To illustrate the results, we present a table containing the
parameters of some new convolutional codes constructed in this
paper. The meaning of the parameters is clear from the context.

\begin{table}[!hpt]
\begin{center}
\caption{Convolutional MDS codes\label{table1}}
\begin{tabular}{|c |}
\hline Parameters of the new codes\\
\hline $(n, n-2i+1, 2; 1, 2i+2{)}_{q}$, $q$ odd prime\\ power,
$q-1=rk$, $k$ even,\\
$n=q+1$, $2\leq i \leq n/2 -2$\\
\hline ${(10, 7, 2; 1, 6)}_{9}$, ($r=4$)\\
\hline ${(10, 5, 2; 1, 8)}_{9}$, ($r=4$)\\
\hline ${(12, 9, 2; 1, 6)}_{11}$, ($r=5$)\\
\hline ${(12, 7, 2; 1, 8)}_{11}$, ($r=5$)\\
\hline ${(12, 5, 2; 1, 10)}_{11}$, ($r=5$)\\
\hline ${(26, 23, 2; 1, 6)}_{25}$, ($r=6$)\\
\hline ${(26, 17, 2; 1, 12)}_{25}$, ($r=6$)\\
\hline ${(26, 7, 2; 1, 22)}_{25}$, ($r=6$)\\
\hline Parameters of the new codes\\
\hline $(n, n-2i, 2; 1, 2i+3{)}_{q}$, $q$ odd prime\\ power,
$q-1=rk$, $k$ odd,\\ $n=q+1$, $2\leq i \leq n/2 -2$\\
\hline ${(12, 8, 2; 1, 7)}_{11}$, ($k=5$)\\
\hline ${(12, 6, 2; 1, 9)}_{11}$, ($k=5$)\\
\hline ${(12, 4, 2; 1, 11)}_{11}$, ($k=5$)\\
\hline ${(20, 16, 2; 1, 7)}_{19}$, ($k=9$)\\
\hline ${(20, 14, 2; 1, 9)}_{19}$, ($k=9$)\\
\hline ${(20, 12, 2; 1, 11)}_{19}$, ($k=9$)\\
\hline ${(20, 10, 2; 1, 13)}_{19}$, ($k=9$)\\
\hline ${(20, 4, 2; 1, 19)}_{19}$, ($k=9$)\\
\hline Parameters of the new codes\\
\hline $(n, n-2i+1, 2; 1, 2i+2{)}_{q}$, $q$ odd prime\\ power,
$q\equiv 1$ (mod $4$), $q-1=rk$, $k$ even,\\ $n=(q+1)/2$, $2\leq i \leq \frac{(n-1)}{2} -1$\\
\hline ${(7, 4, 2; 1, 6)}_{13}$, ($k=2$)\\
\hline ${(9, 6, 2; 1, 6)}_{17}$, ($k=9$)\\
\hline ${(9, 4, 2; 1, 8)}_{17}$, ($k=9$)\\
\hline ${(15, 12, 2; 1, 6)}_{29}$, ($k=4$; $r=7$)\\
\hline ${(15, 10, 2; 1, 8)}_{29}$, ($k=4$; $r=7$)\\
\hline ${(15, 8, 2; 1, 10)}_{29}$, ($k=4$; $r=7$)\\
\hline ${(15, 6, 2; 1, 12)}_{29}$, ($k=4$; $r=7$)\\
\hline ${(15, 4, 2; 1, 14)}_{29}$, ($k=4$; $r=7$)\\
\hline
\end{tabular}
\end{center}
\end{table}


\section{Asymmetric quantum codes}\label{secV}

Asymmetric quantum error-correcting codes (AQECC)
\cite{Steane:1996,Ioffe:2007,Sarvepalli:2009,LaGuardia:2013III} are
quantum codes defined over quantum channels where the probability of
occurrence of qudit-flip errors and phase-shift errors may be
different. The combined amplitude damping and dephasing channel
(specific to binary systems) is an example for a quantum channel
where these probabilities are different.

The scenario is the Hilbert space ${\cal H} = {\mathbb C}^{q^n} =
{\mathbb{C}}^{q} \otimes \ldots \otimes {\mathbb{C}}^{q}$ with a
orthonormal basis consisting of vectors $\mid$$x \rangle$, where the
labels $x$ are elements of ${\mathbb F}_{q}$. The unitary operators
$X(a)$ and $Z(b)$ ($a, b \in {\mathbb F}_{q}$) on ${\mathbb{C}}^{q}$
are defined by $X(a)$$\mid$$x \rangle =$$\mid$$x + a\rangle$ and
$Z(b)$$\mid$$x \rangle = w^{tr_{q/p}(bx)}$$\mid$$x\rangle$,
respectively, where $w=\exp (2\pi i/ p)$ is a $p$-th root of unity.
Assume that ${\bf a}= (a_1, \ldots , a_n)$ and ${\bf b}= (b_1,
\ldots , b_n)$ are vectors in ${\mathbb F}_{q}^{n}$ and consider
$X({\bf a})= X(a_1)\otimes \ldots \otimes X(a_n)$ and $Z({\bf b})=
Z(b_1)\otimes \ldots \otimes Z(b_n)$ be the tensor products of $n$
error operators. The set ${\bf E}_{n} = \{ X({\bf a})Z({\bf b}) \mid
{\bf a}, {\bf b} \in {\mathbb F}_{q}^{n} \}$ is an \emph{error
basis} on the complex vector space ${\mathbb C}^{q^n}$ and the set $
{\bf G}_{n} = \{w^c X({\bf a})Z({\bf b}) \mid {\bf a}, {\bf b} \in
{\mathbb F}_{q}^{n} , c\in {\mathbb F}_p \}$ is the \emph{error
group} associated with ${\bf E}_{n}$. For a quantum error $e = w^c
X({\bf a})Z({\bf b}) \in {\bf G}_{n}$ the $X$-weight is given by
wt$_{X}(e) = \ \# \{i: 1\leq i\leq n | a_i \neq 0\}$; the $Z$-weight
is defined as wt$_{Z}(e) = \ \# \{i: 1\leq i\leq n | b_i \neq 0\}$
and the symplectic (or quantum) weight swt$(e) = \ \# \{i: 1\leq
i\leq n | (a_i , b_i)\neq (0, 0)\}$. An AQECC with parameters ${((n,
K,d_{z}/ d_{x}))}_{q}$ is an $K$-dimensional subspace of the Hilbert
space ${\mathbb C}^{q^n}$ and corrects all qudit-flip errors up to
$\lfloor \frac{d_{x}-1}{2} \rfloor$ and all phase-shift errors up to
$\lfloor \frac{d_{z}-1}{2} \rfloor$. An ${((n, q^{k}, d_{z}/
d_{x}))}_{q}$ code is denoted by ${[[n, k, d_{z}/ d_{x}]]}_{q}$.
Next we recall the well-known CSS construction:

\begin{lemma}\cite{Ketkar:2006,Calderbank:1998,Nielsen:2000}(CSS
construction)\label{CSS1} Let $C_1$ and $C_2$ denote two classical
linear codes with parameters ${[n, k_1, d_1]}_{q}$ and ${[n, k_2,
d_2]}_{q}$, respectively. Assume that $C_2^{\perp}\subset C_1$. Then
there exists an AQECC with parameters $[[n, k_1+k_2-n,
d_{x}/d_{z}]{]}_{q}$, where $d_{x}=$wt$(C_2 \backslash C_1^{\perp})
\}$ and $d_{z}=$wt$(C_1 \backslash C_2^{\perp})$. The resulting
AQECC code is said pure if, in the above construction, $d_{x} =$
wt$(C_2)$ and $d_{z} =$ wt$(C_1)$.
\end{lemma}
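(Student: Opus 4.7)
The plan is to build the quantum code $Q$ as a subspace of $\mathbb{C}^{q^n}$ whose basis is indexed by cosets of $C_1^{\perp}$ inside $C_2$ (the hypothesis $C_2^{\perp}\subseteq C_1$ is equivalent by duality to $C_1^{\perp}\subseteq C_2$), and then to read off its parameters from the structure of $C_1$ and $C_2$. Concretely, for each representative $x\in C_2/C_1^{\perp}$ I would introduce
$$|\psi_x\rangle \;=\; \frac{1}{\sqrt{|C_1^{\perp}|}} \sum_{y \in C_1^{\perp}} |x+y\rangle,$$
and let $Q$ be the span of these vectors. Distinct coset representatives yield orthogonal states, so $\dim Q = |C_2/C_1^{\perp}| = q^{k_2}/q^{n-k_1} = q^{k_1+k_2-n}$, matching the claimed dimension $k_1+k_2-n$.

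Next I would identify the stabilizer of $Q$. A reindexing of the sum shows that $X(\mathbf{a})|\psi_x\rangle = |\psi_x\rangle$ for every $\mathbf{a}\in C_1^{\perp}$, while for $\mathbf{b}\in C_2^{\perp}$ the inner products $\mathbf{b}\cdot(x+y)$ vanish because $x\in C_2$ and $y\in C_1^{\perp}\subseteq C_2$, forcing $Z(\mathbf{b})|\psi_x\rangle = |\psi_x\rangle$. Using the commutation relation $X(\mathbf{a})Z(\mathbf{b}) = w^{\mathrm{tr}_{q/p}(\mathbf{a}\cdot\mathbf{b})} Z(\mathbf{b})X(\mathbf{a})$, the family $\{X(\mathbf{a}):\mathbf{a}\in C_1^{\perp}\}\cup\{Z(\mathbf{b}):\mathbf{b}\in C_2^{\perp}\}$ is mutually commuting and so generates an abelian stabilizer $S$ whose joint $+1$-eigenspace coincides with $Q$.

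To extract the asymmetric distances I would analyze pure $X$- and pure $Z$-errors separately. A pure error $X(\mathbf{a})\in \mathbf{G}_n$ commutes with every stabilizer precisely when $\mathbf{a}\perp C_2^{\perp}$, i.e.\ when $\mathbf{a}\in C_2$, and it acts nontrivially on $Q$ exactly when $\mathbf{a}\notin C_1^{\perp}$. Hence the smallest weight of an undetectable qudit-flip error is $\mathrm{wt}(C_2\backslash C_1^{\perp})$, giving $d_x$. The symmetric argument on $Z(\mathbf{b})$ yields $d_z = \mathrm{wt}(C_1\backslash C_2^{\perp})$, and the purity statement follows at once from noting that if the minimum-weight codeword of $C_2$ (resp.\ $C_1$) lies outside the corresponding stabilizer sublattice, then $d_x=\mathrm{wt}(C_2)$ (resp.\ $d_z=\mathrm{wt}(C_1)$).

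The main obstacle will be justifying that in the asymmetric error model it is legitimate to examine pure $X$ and pure $Z$ errors independently rather than general mixed Pauli errors $X(\mathbf{a})Z(\mathbf{b})\in \mathbf{G}_n$. This reduction rests on applying the Knill--Laflamme detectability criterion separately to each error class, exploiting that the $X$-weight and $Z$-weight functionals depend only on the $\mathbf{a}$- and $\mathbf{b}$-components of an error respectively. This decoupling is precisely what distinguishes the asymmetric CSS construction from the symmetric one and is the content of the cited references \cite{Sarvepalli:2009,LaGuardia:2013III}.
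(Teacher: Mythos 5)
The paper does not prove this lemma at all --- it is stated as a quoted result with citations to \cite{Ketkar:2006,Calderbank:1998,Nielsen:2000} --- so there is no internal proof to compare against; your coset-state construction $|\psi_x\rangle = \frac{1}{\sqrt{|C_1^{\perp}|}}\sum_{y\in C_1^{\perp}}|x+y\rangle$ with the dimension count $q^{k_1+k_2-n}$, the stabilizer identification, and the separate analysis of $X$- and $Z$-errors is exactly the standard argument found in those references. Your sketch is correct, and you rightly flag the only delicate point, namely that for a CSS code the Knill--Laflamme detectability condition decouples into independent conditions on the $\mathbf{a}$- and $\mathbf{b}$-components, which is what legitimizes reading off $d_x=\mathrm{wt}(C_2\setminus C_1^{\perp})$ and $d_z=\mathrm{wt}(C_1\setminus C_2^{\perp})$ from pure $X$- and pure $Z$-errors alone.
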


The following result establishes the Singleton bound to AQECC, i.e.,
the asymmetric quantum Singleton bound (AQSB):

\begin{lemma}\cite[Lemma 3.2]{Sarvepalli:2009}
A pure asymmetric $[[n, k, d_{x}/d_{z}]]_{q}$ CSS code satisfies $k
\leq n -d_{x}-d_{z} +2$.
\end{lemma}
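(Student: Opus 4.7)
The plan is to reduce the asymmetric quantum Singleton bound to the classical Singleton bound applied twice, by way of the CSS construction recalled in Lemma~\ref{CSS1}. Since the statement concerns a \emph{pure} CSS code, we are in the fortunate situation where the $X$- and $Z$-distances are exactly the minimum distances of the two underlying classical codes, rather than the (in general larger) distances of the cosets $C_2 \setminus C_1^{\perp}$ and $C_1 \setminus C_2^{\perp}$. This is what makes the bound accessible without any genuinely quantum machinery.

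More concretely, I would start by writing the given pure asymmetric code as a CSS code coming from a pair of classical linear codes $C_1 = [n, k_1, d_1]_q$ and $C_2 = [n, k_2, d_2]_q$ with $C_2^{\perp} \subset C_1$, so that by Lemma~\ref{CSS1} its parameters are $[[n, k_1 + k_2 - n, d_x/d_z]]_q$. Purity gives $d_z = \operatorname{wt}(C_1) = d_1$ and $d_x = \operatorname{wt}(C_2) = d_2$, and from the dimension formula we obtain $k_1 + k_2 = n + k$.

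Next, I would apply the classical Singleton bound to each of $C_1$ and $C_2$, obtaining $d_1 \leq n - k_1 + 1$ and $d_2 \leq n - k_2 + 1$. Adding these two inequalities and substituting the identities from the previous paragraph yields
\begin{equation*}
d_x + d_z \;=\; d_1 + d_2 \;\leq\; 2n - (k_1 + k_2) + 2 \;=\; 2n - (n + k) + 2 \;=\; n - k + 2,
\end{equation*}
which upon rearrangement is exactly $k \leq n - d_x - d_z + 2$, the claimed asymmetric quantum Singleton bound.

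There is essentially no obstacle here; the only subtle point is making sure that the hypothesis of purity is genuinely used, since without it one only has $d_x \geq \operatorname{wt}(C_2)$ and $d_z \geq \operatorname{wt}(C_1)$ in the wrong direction, and the bound would not follow from the classical Singleton bound alone. Thus the proof is essentially a two-line computation once the CSS decomposition and the dimension/distance identifications it provides are invoked.
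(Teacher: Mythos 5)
Your derivation is correct: purity (as defined after Lemma~\ref{CSS1}) identifies $d_x=\operatorname{wt}(C_2)=d_2$ and $d_z=\operatorname{wt}(C_1)=d_1$, and adding the two classical Singleton bounds together with $k=k_1+k_2-n$ gives $k\leq n-d_x-d_z+2$; you also correctly note that without purity the inequalities $d_x\geq\operatorname{wt}(C_2)$, $d_z\geq\operatorname{wt}(C_1)$ point the wrong way. Note that the paper itself gives no proof of this lemma (it is quoted from the cited reference), and your argument is essentially the standard one used there for pure CSS-type codes, so there is no divergence of approach to report.
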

If an AQECC $Q$ have parameters attaining the AQSB with equality we
say that $Q$ is maximum-distance-separable (MDS) code. In the
following results, new MDS AQECC are constructed.

\begin{theorem}\label{mainasyI}
Assume all the hypotheses of Lemma~\ref{cyclo1} hold. Then there
exist asymmetric quantum MDS codes with parameters $[[n, 2(j-i),
(n-2j) / (2i+2)]{]}_{q}$, for every $0\leq i\leq j \leq n/2 -2$.
\end{theorem}
\begin{proof}
Let $C_2^{\perp}$ be the ($\alpha$-constacyclic) code with defining
set ${ \mathcal Z} = \bigcup_{l=0}^{j}{\mathbb C}_{(s+rl)}$, where
${\mathbb C}_{s}$ and ${\mathbb C}_{(s-rl)}$ are given in
Lemma~\ref{cyclo1}. Then $C_2^{\perp}$ has parameters $[n, n-2j-1,
2j+2{]}_{q}$. Next, consider $C_1$ be the ($\alpha$-constacyclic)
code with defining set ${ \mathcal Z} = \bigcup_{l=0}^{i}{\mathbb
C}_{(s+rl)}$, where $0\leq i\leq j\leq n/2 -2$; $C_1$ has parameters
$[n, n-2i-1, 2i+2{]}_{q}$. From construction one has $C_2^{\perp}
\subset C_1$; since $C_2^{\perp}$ is a MDS code then also is its
(Euclidean) dual code $C_2$, with parameters $[n, 2j+1,
n-2j{]}_{q}$. The dimension of the resulting CSS codes equals
$2(j-i)$. Applying the CSS construction, we obtain an AQECC with
parameters $[[n, 2(j-i), d_{x}\geq (n-2j)/ d_{z}\geq
(2i+2)]{]}_{q}$. Finally, from the AQSB, we have an $[[n, 2(j-i),
(n-2j)/ (2i+2)]{]}_{q}$ AQECC for every $0\leq i\leq j\leq n/2 -2$.
Note that the parameters of these codes attain the AQSB with
equality, hence they are MDS codes.
\end{proof}

\begin{theorem}\label{mainasyII}
Assume all the hypotheses of Lemma~\ref{cyclo2} hold. Then there
exist asymmetric quantum MDS codes with parameters $[[n, 2(j-i),
(n-2j-1)/ (2i+3)]{]}_{q}$ for every $0\leq i\leq j\leq n/2 -2$.
\end{theorem}
\begin{proof}
Let $C_2^{\perp}$ be the code with defining set ${ \mathcal Z} =
\bigcup_{l=0}^{j}{\mathbb C}_{(t+rl)}$, where ${\mathbb C}_{t}$ and
${\mathbb C}_{(t+rl)}$ are given in Lemma~\ref{cyclo2}, and assume
that $C_1$ is the code with defining set ${ \mathcal Z} =
\bigcup_{l=0}^{i}{\mathbb C}_{(t+rl)}$, with $0\leq i\leq j\leq n/2
-2$. Proceeding similarly as in the proof of Theorem~\ref{mainasyI},
the result follows.
\end{proof}

\begin{theorem}\label{mainasyIII}
Assume all the hypotheses of Lemma~\ref{cyclo3} hold. Then there
exist asymmetric quantum MDS codes with parameters $[[n, 2(j-i),
(n-2j) / (2i+2)]{]}_{q}$ for every $0\leq i\leq j\leq
\frac{(n-1)}{2} -1$.
\end{theorem}
\begin{proof}
Similar to that of Theorem~\ref{mainasyI}.
\end{proof}

\begin{remark}
It is important to observe that one can deduce an analogous of
Theorem~\ref{mainIV} for asymmetric quantum codes, generating in
this way more optimal AQECC.
\end{remark}

\begin{theorem}\label{mainasyIV}
Assume all the hypotheses of Theorem~\ref{mainclasIIIA} hold, with
$t \geq 3$. Then there exist asymmetric quantum MDS codes with
parameters $[[n, 2(j-i), (n-2j-1) / (2i+3)]{]}_{q}$ for every $0\leq
i\leq j\leq \frac{(n-1)}{2} -2$.
\end{theorem}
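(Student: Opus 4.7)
The plan is to mirror the CSS-based constructions in Theorems~\ref{mainasyI}--\ref{mainasyIII}, but feeding in the MDS constacyclic family from Theorem~\ref{mainclasIIIA} instead of those based on Lemmas~\ref{cyclo1}--\ref{cyclo3}. Working under the hypotheses of Lemma~\ref{cyclo5}, with $s=1+ri_{0}$ where $i_{0}=(k-1)/2$, I would introduce two $\alpha$-constacyclic BCH codes with nested defining sets: let $C_{2}^{\perp}$ be the code with defining set $\mathcal{Z}^{\prime}=\bigcup_{l=0}^{j}\mathbb{C}_{(s-rl)}$, and let $C_{1}$ be the code with defining set $\mathcal{Z}=\bigcup_{l=0}^{i}\mathbb{C}_{(s-rl)}$, where $0\leq i\leq j\leq (n-1)/2-2$. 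By Theorem~\ref{mainclasIIIA}, these are MDS codes with parameters $[n, n-2j-2, 2j+3]_{q}$ and $[n, n-2i-2, 2i+3]_{q}$, respectively.

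Because $i\leq j$, one has $\mathcal{Z}\subset\mathcal{Z}^{\prime}$, which forces $C_{2}^{\perp}\subset C_{1}$. Since $C_{2}^{\perp}$ is MDS, so is its Euclidean dual $C_{2}$, whose parameters are $[n, 2j+2, n-2j-1]_{q}$. The dimension of the ensuing CSS code is $k_{1}+k_{2}-n=(n-2i-2)+(2j+2)-n=2(j-i)$, and applying Lemma~\ref{CSS1} produces an AQECC with parameters $[[n, 2(j-i), d_{x}\geq (n-2j-1)/ d_{z}\geq (2i+3)]]_{q}$.

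To conclude MDS-ness, I would invoke the asymmetric quantum Singleton bound, which gives $2(j-i)\leq n-d_{x}-d_{z}+2$; substituting the lower bounds on the distances yields the identity $2(j-i)=n-(n-2j-1)-(2i+3)+2$, so both inequalities must be tight and the resulting $[[n, 2(j-i), (n-2j-1)/(2i+3)]]_{q}$ code saturates the AQSB. The only non-mechanical step is verifying that the inclusion of defining sets genuinely implies $C_{2}^{\perp}\subset C_{1}$ in the constacyclic setting (standard: a larger defining set produces a smaller subcode) and that the range $0\leq i\leq j\leq (n-1)/2-2$ keeps us inside the existence window of Theorem~\ref{mainclasIIIA}; I do not anticipate any serious obstacle beyond this routine bookkeeping.
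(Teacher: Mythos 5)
Your proposal is correct and follows essentially the same route as the paper: it mirrors the proofs of Theorems~\ref{mainasyI} and \ref{mainasyII}, taking nested defining sets built from the cosets of Lemma~\ref{cyclo5} (via Theorem~\ref{mainclasIIIA}), applying the CSS construction of Lemma~\ref{CSS1} to the pair $C_2^{\perp}\subset C_1$, and using the asymmetric quantum Singleton bound to force the distance inequalities to be equalities. The parameter bookkeeping ($C_2$ of parameters $[n,2j+2,n-2j-1]_q$, CSS dimension $2(j-i)$) matches what the paper intends by ``analogous to Theorem~\ref{mainasyII}.''
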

\begin{proof}
Similar to that of Theorem~\ref{mainasyI}.
\end{proof}

\begin{remark}
It is interesting to observe that if all hypotheses of
Theorem~\ref{mainclasIIIB} are true, more asymmetric quantum MDS
codes can be constructed.
\end{remark}

\begin{remark}
Although the parameters of our asymmetric quantum MDS codes are not
new (see \cite{Ezerman:2013}) the construction proposed here differs
from that given in \cite{Ezerman:2013}: those codes of length
$n=q+1$ are derived from (classical) extended generalized
Reed-Solomon (GRS) codes and those of length $n=(q+1)/2$ are derived
from (classical) GRS codes, whereas our AQECC are derived from
classical constacyclic codes. Therefore, our construction is an
alternative way to produce asymmetric quantum MDS codes of lengths
$n=q+1$ and $n=(q+1)/2$.
\end{remark}

\section{Summary}\label{secVI}
In this paper we have constructed several new families of optimal
convolutional codes derived from classical constacyclic codes.
Additionally, we also have constructed new families of almost MDS
block and convolutional codes. Moreover, based on an alternative
procedure, we have constructed families of asymmetric quantum codes
known in the literature. These results show that the class of
constacyclic codes is also a good resource in order to find codes
with good or even optimal parameters.

\section*{Acknowledgment}
I would like to thank the anonymous referee for his/her valuable
comments and suggestions that improve significantly the quality and
the presentation of this paper. This research has been partially
supported by the Brazilian Agencies CAPES and CNPq.

\small

%

\end{document}